\newtheorem{thm}{Theorem}
\newtheorem{prop}{Proposition}[section]
\newtheorem{lemma}[prop]{Lemma}
\theoremstyle{remark}
\newtheorem{df}[prop]{Definition}
\newtheorem{rmk}[prop]{Remark}
\numberwithin{equation}{section}
\def\TT{\mathbb T}
\begin{document}

\title{Anderson localisation in stationary ensembles of quasiperiodic operators}
\author{Victor Chulaevsky\textsuperscript1 and Sasha Sodin\textsuperscript2}
\maketitle
\footnotetext[1]{D\'{e}partement de Math\'{e}matiques,
Universit\'{e} de Reims, Moulin de la Housse, B.P. 1039,
51687 Reims Cedex 2, France. Email:
victor.tchoulaevski@univ-reims.fr.}
\footnotetext[2]{School of Mathematical Sciences, Queen Mary University of London, London E1 4NS, United Kingdom. Email:
a.sodin@qmul.ac.uk.}

\begin{abstract}An ensemble of quasi-periodic discrete   Schr\"{o}dinger operators with an arbitrary number
of basic frequencies is considered, in a lattice of arbitrary dimension, in which the hull function
is a realisation of a stationary Gaussian process on the torus. We show that, for almost
every element of the ensemble, the quasi-periodic operator boasts Anderson localization
with simple pure point spectrum at strong coupling. One of the ingredients of the proof is a new lower bound on the interpolation error for stationary Gaussian processes on the torus (also known as local non-determinism).
\end{abstract}

{\centering\em \qquad\qquad Dedicated to Ya.\ G.\ Sinai on his 85th birthday\par}

\section{Introduction}
We consider quasiperiodic Schr\"odinger operators on $\mathbb Z^d$ (equipped with the graph metric $\| \cdot \|$), for arbitrary $d \geq 1$ and  an arbitrary number of frequencies $\nu \geq 1$. Let $\TT^\nu = (\mathbb R/ \mathbb Z)^\nu$; fix a continuous function $v: \mathbb T^\nu \to \mathbb R$, a $\nu \times d$ frequency matrix $\alpha = (\alpha_{ij})$, an initial point $\omega \in \TT^\nu$, and a coupling $g > 0$, and define an operator $H = H(\omega; g)$ on $\ell_2(\mathbb Z^d)$ by
\begin{equation}\label{eq:schr} (H(\omega; g) f)(x) = \sum_{\|y-x\|=1} f(y) + g v(\omega + \alpha x) f(x)~,.\end{equation}
Operators of the form $H(\omega;g)$ form an important subclass of metrically transitive (ergodic) operators \cite{FiPbook}.

Operators of the form (\ref{eq:schr}) have been intensively studied for $d = \nu = 1$. It was found that for large $g \geq g_0$ and Diophantine $\alpha$, the operator exhibits Anderson localisation, manifesting itself in pure point spectrum with exponentially decaying eigenfunctions. This phenomenon has been rigorously established first for the Maryland model $v(\omega) = \tan (2 \pi \omega)$ and for more general tangent-like potentials \cite{FiP84,Sim85,BLS83,JK} (following the physical work \cite{FGP84}), then for the Almost Mathieu model $v(\omega) = \cos(2\pi \omega)$ and more general cosine-like potentials \cite{Sin87,FSW90,J0,J1}, and, more recently, for general analytic potentials \cite{BG00,Bourg} and further for potentials in Gevrey classes \cite{Kl1,Kl2}. We refer to the survey \cite{JM} for a review of the state of art. In \cite{BGS01}, Anderson localisation was established for a class of analytic potentials for $d = 1$ and $\nu = 1,2$.

Much less is known for $d > 1$. The analysis of tangent-like potentials was extended to higher dimension in \cite{BLS83}. In \cite{Craig}, quasiperiodic potentials exhibiting pure point spectrum were constructed using an inverse spectral procedure. In \cite{BGS02}, Anderson localisation at strong coupling was proved for analytic potentials and $d = \nu = 2$; this result is perturbative, meaning that for each $\omega$ localisation holds outside a set of frequencies the measure of which tends to zero as $g \to \infty$. In \cite{Bourg07}, the result of \cite{BGS02} was extended to arbitrary $d = \nu$, and in \cite{JLS} -- to arbitrary $d$ and $\nu$. We also mention the work  \cite{KS} on delocalisation, i.e.
\ the existence of absolutely continuous spectrum, at weak coupling (for an operator in the continuum).

These results raised the question whether Anderson localisation persists when $v$ is less smooth, e.g.\ has a finite number of derivatives. Another question is whether localisation holds in the non-perturbative setting for $d > 1$, under a usual Diophantine condition on the frequency. As these questions are yet to be answered for explicit $v$ such as $v(\theta) = \sum_j \cos \theta_j$, it was suggested in \cite{C11,C14} to study the properties of (\ref{eq:schr}) for typical hull functions $v$: namely, $v$ is chosen as a realisation of a stochastic process on $\mathbb T^\nu$. Related ideas appeared in the work  \cite{Chan07}. In these works, Anderson localisation was established for $v$ sampled from a class of (non-stationary) stochastic processes, constructed to ensure the required properties. Here, we extend these results to the more natural class of stationary Gaussian processes on the torus:
\begin{equation}\label{eq:strg} v(\omega) = \sum_{\ell \in \mathbb (2\pi \mathbb Z)^\nu} \frac{g_l \cos \langle \omega, \ell \rangle + h_\ell \sin \langle \omega, \ell\rangle}{\sqrt{W(\ell)}}~, \quad \omega \in \mathbb T^\nu~,\end{equation}
where $g_\ell$ and $h_\ell$ are jointly independent standard Gaussian random variables, and $W: 2 \pi \mathbb Z^\nu \to \mathbb R_+$ is a spectral weight. Denote the underlying probability space by $(\Theta, \mathcal B^\Theta, \mathbb P^\Theta)$; to emphasise the dependence on $\theta$, we write $v(\omega) = v(\omega, \theta)$. Denote the operator corresponding to $\theta \in \Theta$ by $H(\omega,\theta; g)$.

\begin{thm}\label{t} Assume that $W: 2 \pi \mathbb Z^\nu \to \mathbb R_+$ is such that
\[ c \|\ell\|^{\nu + \delta} \leq  W(\ell) \leq C e^{C\|\ell\|^\zeta}~, \quad \ell \in 2\pi\mathbb Z^\nu~, \]
for some $\kappa, \zeta, \delta > 0$, and $C, c >0$, and that $\alpha$ satisfies the Diophantine condition
\begin{equation}\label{eq:dioph} \operatorname{dist} (\alpha x, \mathbb Z^\nu)  \geq c' \|x\|^{-A}~, \quad x \in \mathbb Z^d \setminus \{0\}\end{equation}
with some $A>0$ and $c' > 0$. If $(A+1)\zeta < 1$, then there exists a map $\Theta^+: \mathbb R_+ \to \mathcal B^\Theta$ such that $\mathbb P^\Theta(\Theta^+(g)) \to 1$ as $g \to +\infty$, and for every $\theta \in \Theta^+(g)$ and almost every $\omega \in \mathbb T^\nu$, the spectrum of the operator $H(\omega,\theta; g)$ constructed from (\ref{eq:strg})
is pure point, and every eigenfunction $\psi$ of $H(\omega, \theta; g)$ satisfies
\begin{equation}\label{eq:expdec}
 \sup_{x \in \mathbb Z^d} |\psi(x)| e^{\|x\|} < \infty~. \end{equation}
\end{thm}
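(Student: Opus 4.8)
\medskip
\noindent\emph{Proof idea.} The plan is to run a multiscale analysis (MSA) in which the randomness in $\theta$ supplies the probabilistic input and the Diophantine condition \eqref{eq:dioph} the arithmetic separation, with the two bounds on $W$ playing opposite roles. The lower bound $W(\ell)\ge c\|\ell\|^{\nu+\delta}$ guarantees $v(\cdot,\theta)\in H^{s}(\TT^\nu)$ a.s.\ for $s<\delta/2$, so that for a.e.\ $\omega$ the operator $H(\omega,\theta;g)$ is a.s.\ well defined and essentially self-adjoint on finitely supported functions, and it supplies the Gaussian tail bounds for the suprema of $v(\cdot,\theta)$ that we need. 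The upper bound $W(\ell)\le Ce^{C\|\ell\|^{\zeta}}$ is the decisive one: since $\widehat v(\ell)$ is a standard Gaussian times $1/\sqrt{W(\ell)}$, it forbids the Fourier coefficients of $v(\cdot,\theta)$ from decaying too fast, which is what powers the quantitative local non-determinism estimate below. Finally, \eqref{eq:dioph} gives that the phases $\{\omega+\alpha x:\|x-y\|\le L\}$ are pairwise at distance $\ge c'(2L)^{-A}$ on $\TT^\nu$, uniformly in $\omega$ and $y$.

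\medskip
\noindent The core is a Wegner-type estimate built on the interpolation-error (local non-determinism) lower bound announced in the abstract, which I would establish in the quantitative form
\begin{equation*}
\operatorname{Var}\!\big(v(\omega_0,\theta)\,\big|\,v(\omega_1,\theta),\dots,v(\omega_n,\theta)\big)\;\ge\;c\exp\!\big(-C\,r^{-\zeta/(1-\zeta)}\big)~,\qquad r:=\min_{1\le j\le n}\operatorname{dist}(\omega_0-\omega_j,\mathbb Z^\nu)~,
\end{equation*}
via the spectral representation of the prediction error and the bound $1/W(\ell)\ge e^{-C\|\ell\|^\zeta}/C$ (the exponent $\zeta/(1-\zeta)$, rather than $\zeta$, is the extrapolation cost characteristic of sub-exponential spectral decay, and it is this exponent that, in the end, forces the hypothesis $(A+1)\zeta<1$). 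Applied with the nodes $\{\omega+\alpha x:x\in\Lambda\setminus\{x_0\}\}$ together with the phase separation, it shows that the covariance matrix $\Sigma_\Lambda$ of $V_\Lambda:=(v(\omega+\alpha x,\theta))_{x\in\Lambda}$ satisfies $\lambda_{\min}(\Sigma_\Lambda)\ge|\Lambda|^{-1}e^{-CL^{A\zeta/(1-\zeta)}}$ (the conditional variances of its coordinates, which are the reciprocals of the diagonal of $\Sigma_\Lambda^{-1}$, are all $\ge e^{-CL^{A\zeta/(1-\zeta)}}$). Since, by Hellmann--Feynman, every eigenvalue of $H_{\Lambda_L(y)}(\omega,\theta;g)=\Delta_{\Lambda}+g\,\mathrm{diag}\,V_\Lambda$ increases at the uniform rate $g$ under the diagonal perturbation $V_\Lambda\mapsto V_\Lambda+t\mathbf 1$, a spectral-averaging argument along that direction (whose image in $\xi$-space, with $V_\Lambda=\Sigma_\Lambda^{1/2}\xi$, has length $\|\Sigma_\Lambda^{-1/2}\mathbf 1\|\le|\Lambda|\,e^{CL^{A\zeta/(1-\zeta)}/2}$) yields, for every $E\in\mathbb R$, $\eta>0$, and uniformly in $\omega$,
\begin{equation*}
\mathbb P^\Theta\!\left(\operatorname{dist}\!\big(E,\operatorname{spec}H_{\Lambda_L(y)}(\omega,\theta;g)\big)\le\eta\right)\;\le\;C\,|\Lambda_L|^{2}\,g^{-1}\,e^{CL^{A\zeta/(1-\zeta)}}\,\eta~.
\end{equation*}
The identical computation, run under conditioning on the values of $v$ at all other phases of any box of side $L'$ containing $\Lambda_L(y)$, gives the same bound with $L$ replaced by $L'$ (and for the probability of coming $\eta$-close to any prescribed finite set) --- this is where the strong correlations among the $v(\omega+\alpha x,\theta)$, the usual obstruction to Wegner estimates for quasiperiodic potentials, are overcome.

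\medskip
\noindent Given these ingredients, the MSA is standard. The initial-scale estimate at a scale $L_0=L_0(g)\to\infty$ (a small power of $\log g$, so that $g^{-1}e^{CL_0^{A\zeta/(1-\zeta)}}\to0$) comes from strong coupling: off an event of $\mathbb P^\Theta$-probability $O(|\Lambda_{L_0}|^2g^{-1}e^{CL_0^{A\zeta/(1-\zeta)}})$ --- estimated by the bound above applied to the increments $v(\omega+\alpha x,\theta)-v(\omega+\alpha x',\theta)$ --- the diagonal entries $g\,v(\omega+\alpha x,\theta)$, $x\in\Lambda_{L_0}$, are pairwise $\ge1$-separated, which makes the Green's function of $H_{\Lambda_{L_0}}(\omega,\theta;g)$ decay at rate $\gtrsim\log g$ at every energy. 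The inductive step from $L_k$ to $L_{k+1}=L_k^{\rho}$ ($\rho>1$ close to $1$) is the usual sub-box/geometric-resolvent argument: a scale-$L_{k+1}$ box fails to be ``$E$-good'' only if two far-apart scale-$L_k$ sub-boxes are simultaneously $E$-bad, and --- for all $E$ at once --- the $E$-bad energies of a sub-box form an $e^{-L_k^{\gamma}}$-neighbourhood of its $\le|\Lambda_{L_k}|$ eigenvalues, so the conditional Wegner bound with $\eta=e^{-L_k^{\gamma}}$ shows the second sub-box's spectrum avoids it outside a set of probability $\operatorname{poly}(L_{k+1})\,e^{CL_{k+1}^{A\zeta/(1-\zeta)}}\,e^{-L_k^{\gamma}}$. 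This is summably small precisely when one can choose $\rho A\zeta/(1-\zeta)<\gamma<1$, i.e.\ when $A\zeta/(1-\zeta)<1$, i.e.\ $(A+1)\zeta<1$. One thus gets bad sets $\mathcal N_k\subset\TT^\nu\times\Theta$ with $(\mathrm{Leb}\times\mathbb P^\Theta)(\mathcal N_k)\le\varepsilon_k$, $\sum_k\varepsilon_k<\infty$ and $\to0$ as $g\to\infty$. For $g$ large the decay rate produced exceeds $1$, and on the complement of $\limsup_k\mathcal N_k$ (of full $(\mathrm{Leb}\times\mathbb P^\Theta)$-measure, by Borel--Cantelli) the standard spectral argument applies: a polynomially bounded generalised eigenfunction at an energy $E$ (which exists, by Schnol's criterion, for spectrally a.e.\ $E$) cannot be non-negligible on two far-apart $E$-good boxes, hence decays like $e^{-\|x\|}$, so the spectrum is pure point with \eqref{eq:expdec}. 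Setting $\Theta^+(g)$ to be the set of $\theta$ for which this holds for Lebesgue-a.e.\ $\omega$ ($=\varnothing$ for small $g$), Fubini gives $\mathbb P^\Theta(\Theta^+(g))\to1$.

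\medskip
\noindent The main obstacle is the Wegner estimate --- both establishing the local non-determinism bound with the sharp exponent $-\zeta/(1-\zeta)$ (qualitative non-degeneracy of $\Sigma_\Lambda$ would not do) and propagating the resulting factor $e^{CL^{A\zeta/(1-\zeta)}}$ through the multiscale induction compatibly with the merely polynomial $L^{-A}$ phase separation and the growth of scales; it is exactly this balance that $(A+1)\zeta<1$ makes possible. The remaining, deterministic, ingredients --- phase separation, the strong-coupling initial estimate, and the passage from MSA to localisation --- are comparatively routine.
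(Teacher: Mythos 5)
Your overall architecture is the same as the paper's (a local non-determinism bound with exponent $\zeta/(1-\zeta)$ feeding a Wegner-type estimate into a multiscale analysis, with $(A+1)\zeta<1$ entering exactly as $A\eta<1$ for $\eta=\zeta/(1-\zeta)$), but there are genuine gaps at the two places where the paper does its real work. First, the concluding step ``setting $\Theta^+(g)$ to be the set of $\theta$ for which this holds for a.e.\ $\omega$, Fubini gives $\mathbb P^\Theta(\Theta^+(g))\to1$'' is invalid: a bad set in $\TT^\nu\times\Theta$ of joint measure $\varepsilon(g)\to0$ can perfectly well have \emph{every} $\theta$-section of positive (tiny) Lebesgue measure, in which case your $\Theta^+(g)$ is empty. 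The paper closes this by exploiting translation invariance and ergodicity of the $\mathbb Z^d$-action: Chebyshev gives, for most $\theta$, a set of $\omega$ of measure $\ge 1/2$ on which the MSA hypotheses hold, and since $H(T^x\omega,\theta;g)$ is a translate of $H(\omega,\theta;g)$ and the conclusion (pure point spectrum plus the centre-free decay bound (\ref{eq:expdec})) is shift-stable, a.e.\ orbit hits the good set and localisation propagates to a.e.\ $\omega$. Relatedly, because all your probabilistic estimates are taken at a fixed $\omega$ (a supremum over $\omega$ of probabilities, not a probability of a uniform-in-$\omega$ event), you never confront the obstruction that dominates the paper's Section~2: a Wegner bound uniform in $\omega$ cannot exclude resonant boxes at all --- resonances are topologically unavoidable --- it can only bound the number of simultaneous resonances (by roughly $\nu/\kappa+1$, via an $\omega$-net argument that needs the H\"older regularity coming from the lower bound on $W$, which your sketch never uses). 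This is why the paper's deterministic induction (Proposition~\ref{p:msa}) is built around $J$-sparseness of resonant $L$-rectangles, rather than the standard ``no two far-apart bad sub-boxes'' dichotomy you invoke; your two-box scheme with conditional Wegner could in principle be run at fixed $\omega$ and then combined with the ergodicity transfer above, but as written neither that transfer nor the uniform-in-$\omega$ alternative is supplied.

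Two further points. Your initial-scale claim is wrong as stated: pairwise $1$-separation of the diagonal entries $g\,v(\omega+\alpha x,\theta)$ does not give Green-function decay at rate $\gtrsim\log g$ at every energy (a ladder of values spaced by $1$ on adjacent sites yields only $O(1)$ decay at energies inside the ladder). What is needed, and what the paper uses, is that with high probability among any few disjoint $L_0$-boxes at least one satisfies $\operatorname{dist}(E,\sigma(H_{R_j}))\ge g\,e^{-L_0^r}$, which for $g\ge e^{2L_0^r}$ gives decay by Combes--Thomas. Finally, the displayed conditional-variance inequality you take as the ``core'' is precisely the paper's Proposition~\ref{p:var}; you state it with the correct exponent but give no proof beyond the phrase ``via the spectral representation'', whereas in the paper it is one of the two main ingredients, established through Karhunen's formula together with a Paley--Wiener/Levinson-type construction of a compactly supported function with prescribed Fourier decay.
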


\begin{rmk} According to a theorem of Groshev \cite{Grosh, BerVel}, for $\alpha$ in a set of full measure the condition (\ref{eq:dioph}) holds with any $A > d/\nu$.\end{rmk}

\begin{rmk} As part of the proof, we show in Lemma~\ref{l:weg} that the number of ``resonances'' is uniformly bounded. For processes with uniformly Lipschitz realisation, our uniform bound $k_{\max}= \nu + 1$ is optimal, as $\nu+1$-fold resonances are known to be topologically unavoidable. For a different class of Gaussian processes, the same conclusion was established in \cite{C11}.
\end{rmk}

The main theorem follows from two propositions. The first one, Proposition~\ref{t:1}, establishes the conclusion of Theorem~\ref{t} in a more abstract setting, when  $\omega + \alpha x$ in (\ref{eq:schr}) is replaced with an orbit of an ergodic action of $\mathbb Z^d$ on a metric probability space $\Omega$. The second one, Proposition~\ref{p:var}, confirms that the assumptions are satisfied for the process (\ref{eq:strg}).

\paragraph{A general localisation theorem}

In this section, we replace the torus $\mathbb T^{\nu}$ with a metric probability space $(\Omega, \mathcal B^\Omega, \mathbb P^\Omega, \operatorname{dist})$ of  finite metric dimension, i.e.\ we assume that there exists $\nu > 0$ (not necessarily integer) such that, for any $\epsilon \in (0, 1]$, $\Omega$ admits an $\epsilon$-net of cardinality at most $(C/\epsilon)^\nu$. Let $T: \Omega \times \mathbb Z^d \to \Omega$ be an ergodic action of $\mathbb{Z}^d$ on $\Omega$ satisfying the Diophantine property
\begin{equation}\label{eq:upa} \text{\bf (UPA)}_A \quad \inf_\omega \min_{0 < \|x\| \leq L} \operatorname{dist}(T^x \omega, \omega) \geq c L^{-A}~, \quad L \in \mathbb N~. \end{equation}
For the case of $\mathbb T^\nu$ with the action $T^x \omega = \omega + \alpha x$, the condition $\text{\bf (UPA)}_A$ boils down to the Diophantine property (\ref{eq:dioph}).

Let $(\Theta, \mathbb B^\Theta, \mathbb P^\Theta)$ be an additional probability space, and let $v(\omega, \theta)$ be a (modification of a) stochastic process defined on $\Theta$ and taking values in the space of uniformly $\kappa$-H\"older-continuous functions from $\Omega$ to $\mathbb R$ (for some fixed $\kappa>0$), so that for any $\omega \in \Omega$ the conditional distribution of the random variable $v(\omega, \cdot)$ conditioned on the values of the process in the complement to the $\epsilon$-neighbourhood $Q_\epsilon(\omega)$ of $\omega$ is absolutely continuous and admits a density satisfying the local interpolation bound
\begin{equation}\label{eq:lib} \text{\bf (LIB)}_\eta \quad
p_\omega (t \, \mid \, \Omega \setminus Q_\epsilon(\omega)) \leq \exp(C \epsilon^{-\eta})~, \quad \epsilon \in (0,\epsilon_0]!.\end{equation}
Then we replace (\ref{eq:schr}) with the more general metrically transitive operator
\begin{equation}\label{eq:schr''} (H(\omega, \theta; g) f)(x) = \sum_{\|y-x\|=1} f(y) + g v(T^x \omega, \theta) f(x)~.\end{equation}

\begin{prop}\label{t:1} Assume that the assumptions $\text{\bf (UPA)}_A$ and $\text{\bf (LIB)}_\eta$ hold with $A$ and $\eta$ such that $A\eta < 1$. Then there exists a map $\Theta^+: \mathbb R_+ \to \mathcal B^\Theta$ such that $\mathbb P^\Theta(\Theta^+(g)) \to 1$ as $g \to +\infty$, and for every $\theta \in \Theta^+(g)$ and almost every $\omega \in \Omega$, the spectrum of the operator $H(\omega,\theta; g)$
is pure point, and every eigenfunction $\psi$ satisfies
\begin{equation}\label{eq:expdec'}
 \sup_x |\psi(x)|   e^{\|x\|} < \infty~. \end{equation}
\end{prop}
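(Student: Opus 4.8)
The plan is to reduce, via the generalised eigenfunction expansion, to exponential decay of polynomially bounded solutions, and to obtain that decay from a multiscale analysis whose only probabilistic input is a Wegner‑type resonance count built from $\textbf{(LIB)}_\eta$ and $\textbf{(UPA)}_A$.

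\emph{Step 1 (reduction to exponential decay of generalised eigenfunctions).} First I would invoke the Sch'nol--Berezanskii theorem: the spectral measure of the operator \eqref{eq:schr''} is carried by those energies $E$ for which $H(\omega,\theta;g)\psi=E\psi$ admits a non‑zero polynomially bounded solution. It therefore suffices to produce a map $\Theta^+:\mathbb R_+\to\mathcal B^\Theta$ with $\mathbb P^\Theta(\Theta^+(g))\to 1$ such that, for every $\theta\in\Theta^+(g)$ and $\mathbb P^\Omega$‑a.e.\ $\omega$, any such solution satisfies \eqref{eq:expdec'}; the bound \eqref{eq:expdec'} then forces $\psi\in\ell^2$, so $E$ is an honest eigenvalue, the spectrum is pure point, and the asserted decay holds. (The simplicity recorded in the abstract is then obtained from the resonance count of Step~2 applied with threshold tuned to $E$.)

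\emph{Step 2 (a Wegner/resonance‑counting lemma, i.e.\ Lemma~\ref{l:weg}, and the crux).} Choose $\rho$ with $A\eta<\rho<1$, possible precisely because $A\eta<1$, and set $\delta_L=\exp(-L^{\rho})$; call a site $x$ \emph{$E$‑resonant at scale $L$} if $|g\,v(T^x\omega,\theta)-E|\le g\,\delta_L$. I would prove that one may choose $\Theta^+(g)$ (with $\mathbb P^\Theta(\Theta^+(g))\to 1$) so that, for $\theta\in\Theta^+(g)$ and a.e.\ $\omega$, every cube of side $L$ contains at most $k_{\max}=\nu+1$ sites that are $E$‑resonant at scale $L$, simultaneously for all $E$ and all $L\ge L_0$. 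The mechanism: any $k_{\max}+1$ distinct sites of such a cube have, by $\textbf{(UPA)}_A$, images under $T$ that are mutually $\gtrsim L^{-A}$‑separated in $\Omega$; conditioning the Gaussian field on its values outside $\asymp L^{-A}$‑balls about these points and combining $\textbf{(LIB)}_\eta$ with a Gaussian determinant (Hadamard‑type) inequality, the joint conditional density of the corresponding potential values is at most $\exp(CL^{A\eta})$, so the probability that all $k_{\max}+1$ of them are $E$‑resonant at scale $L$ is at most $\bigl(C\delta_L\exp(CL^{A\eta})\bigr)^{k_{\max}+1}$. A union bound over the $\le L^{d(k_{\max}+1)}$ choices of sites, a $\delta_L$‑net of the (a.s.\ bounded) spectral window, and a fine net of $\omega$ --- all polynomially many in $L$, since $\Omega$ has finite metric dimension $\nu$ and $v$ is uniformly $\kappa$‑H\"older along orbits --- still leaves something of size $\exp\!\bigl(-L^{\rho}(1-o(1))\bigr)$ because $\rho>A\eta$, which is summable over dyadic scales, and Borel--Cantelli concludes; that $\mathbb P^\Theta(\Theta^+(g))\to1$ follows since at the base scale the constants are absorbed once $g$ is large (or one lets $L_0=L_0(g)$ grow slowly). \emph{This step is the main obstacle}: it is where the long‑range dependence inherent in a stationary Gaussian ensemble is overcome --- $\textbf{(LIB)}_\eta$ supplying exactly the quantitative conditional absolute continuity that replaces the independence used in the i.i.d.\ theory --- and it is the one place where $A\eta<1$ enters, rendering the interpolation cost $\exp(CL^{A\eta})$ subexponential and hence negligible against the exponential gains of Step~3.

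\emph{Step 3 (multiscale analysis) and Step 4 (conclusion).} I would feed Lemma~\ref{l:weg} into a multiscale analysis on scales $L_{s+1}=\lceil L_s^{\beta}\rceil$ with $\beta>1$ close to $1$, calling a cube $(E,\gamma)$‑good if $\|(H_{\Lambda_{L_s}}-E)^{-1}\|\le\exp(L_s^{1/2})$ and $|G_{\Lambda_{L_s}}(u,v;E)|\le\exp(-\gamma\|u-v\|)$ for $\|u-v\|\ge L_s/10$. At the base scale a cube containing no $E$‑resonant site is good, with rate $\log\!\bigl(g\delta_{L_0}/2d\bigr)\ge 1$ at strong coupling, by a Neumann series in the hopping term of \eqref{eq:schr''}. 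For the inductive step, a scale‑$L_{s+1}$ cube can fail to be good only if it contains more than $k_{\max}$ disjoint \emph{singular} scale‑$L_s$ subcubes (each carrying an $E$‑resonant site), which is impossible by Lemma~\ref{l:weg}, or harbours a small‑denominator event; but at strong coupling the latter is again governed by the same count, via a Schur‑complement reduction of $H_{\Lambda_{L_s}}-E$ to the $(\le k_{\max})\times(\le k_{\max})$ block of resonant sites, whose eigenvalues agree with the diagonal entries $g v(T^{x}\omega)-E$ up to exponentially small tunnelling corrections. Splicing the good subcube resolvents through the resolvent identity upgrades to scale $L_{s+1}$ with rate $\gamma_{s+1}\ge\gamma_s\bigl(1-O(L_s^{-\epsilon})\bigr)$, so the rates telescope to $\gamma_\infty\ge 1$; the sole non‑standard feature --- tolerating up to $k_{\max}=\nu+1$ singular subcubes rather than at most one --- is harmless since $k_{\max}$ is an absolute constant, and it is unavoidable, $(\nu+1)$‑fold resonances being topologically forced. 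Finally, good Green's functions at every scale imply, by the familiar paving argument, that any polynomially bounded solution of $H\psi=E\psi$ decays at rate $\gamma_\infty\ge 1$ away from at most $k_{\max}$ localisation centres; since $\gamma_\infty\ge 1$ and these centres are fixed lattice points, $|\psi(x)|\,e^{\|x\|}$ is bounded by the triangle inequality, which is \eqref{eq:expdec'}, whence $\psi\in\ell^2$, every $E$ in the support of the spectral measure is an eigenvalue, and the spectrum of $H(\omega,\theta;g)$ is pure point.
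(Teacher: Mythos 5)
Your overall architecture (Schnol reduction, a resonance-counting lemma from $\textbf{(LIB)}_\eta$ and $\textbf{(UPA)}_A$, then multiscale analysis) matches the paper in outline, but Step 3 contains a genuine gap. You define resonance at scale $L$ as a \emph{single-site} condition $|g\,v(T^x\omega,\theta)-E|\le g\,\delta_L$ with $\delta_L=e^{-L^\rho}$, and you claim that small-denominator events for $H_{\Lambda_{L_s}}-E$ at all scales are ``governed by the same count'' via a Schur complement whose eigenvalues agree with the diagonal entries up to exponentially small corrections. This works only while $g\,\delta_{L_s}\gg 1$: once $L_s^\rho>\log g$ (which happens after finitely many scales for any fixed $g$), non-resonance of the diagonal gives no useful bound on $\|(H_{\Lambda}-E)^{-1}\|$, since the hopping shifts eigenvalues by $O(1)\gg g\,\delta_{L_s}$ and the self-energy in the Schur reduction is of size $\sim (g\,\delta_{L_s})^{-1}$, not exponentially small. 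Thus eigenvalue concentration for the finite-volume Hamiltonians at scale $L_s$ cannot be deduced from single-site resonance counts; it requires a Wegner-type estimate for the boxes themselves at \emph{every} scale. This is exactly how the paper proceeds: resonance is defined through $\|G_E[H_s]\|$ for sub-boxes and box-differences $s\in\mathfrak B_2$, Lemma~\ref{l:weg} is a spectral-averaging (Wegner) bound for these restrictions with threshold $e^{L^r}/g$, $A\eta<r<b/\gamma^2$, applied at each scale and summed by Borel--Cantelli, and the strong coupling $g\gtrsim e^{L_0^r}$ is used only once, at the base scale, via Combes--Thomas to convert non-resonance into regularity (assumption (2) of Proposition~\ref{p:msa}).

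Two secondary points. First, in Step 2 your nets are not ``polynomially many in $L$'': to make the resonance condition stable you need an energy net of spacing $\sim g\delta_L$ in a window of length $O(gR)$ and an $\omega$-net of spacing $\sim(\delta_L/R)^{1/\kappa}$, i.e.\ cardinalities of order $e^{L^\rho}$ and $e^{(\nu/\kappa)L^\rho}$ respectively; these eat $1+\nu/\kappa$ factors of $\delta_L$, which is why the paper requires the multiplicity $J$ to exceed $\nu/\kappa+1$ (the value $\nu+1$ you quote is only the Lipschitz case $\kappa=1$ of the paper's remark), and why the H\"older hypothesis $\textbf{(UH\"ol)}_\kappa$ must enter the count. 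This is fixable (any finite uniform bound suffices), but your exponent bookkeeping as written does not close. Second, the passage from ``for each $\omega$, almost every $\theta$ works'' to the stated conclusion ``for $\theta\in\Theta^+(g)$ and almost every $\omega$'' needs the Fubini/ergodicity argument the paper gives at the end of Section~\ref{s:pft1} (shifting $\omega$ and using that the assumptions of Proposition~\ref{p:msa} for one point of the orbit suffice); your proposal glosses over this step.
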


\begin{rmk}  Proposition~\ref{t:1} (and, accordingly, also Theorem~\ref{t}) can be strengthened in several directions, without invoking new methods:
\begin{enumerate}
\item the rate of exponential decay (\ref{eq:expdec}) can be improved to $\sup_x |\psi(x)| e^{m_g\|x\|} < \infty$ for an arbitrary $m_g = o(g)$;
\item on the event $\Theta^+(g)$, the operator can be shown to exhibit dynamical localisation (our bounds on the eigenfunctions are sufficient to control the eigenfunction correlators \cite{Ai94,ASFH01,AWbook});
\item on the event $\Theta^+(g)$, the spectrum of $H$ can be shown to be simple (see \cite{C14}, building on the method of \cite{KM06}).
\end{enumerate}
\end{rmk}

\paragraph{Interpolation of stationary processes}

Consider a stationary Gaussian process
\begin{equation}\label{eq:strg'} v(\omega) = \sum_{\ell \in \mathbb (2\pi \mathbb Z)^\nu} \frac{g_\ell \cos \langle \omega, \ell \rangle + h_\ell \sin \langle \omega, \ell\rangle}{\sqrt{W(\ell)}}~, \quad \omega \in \TT^\nu~,\end{equation}
as in (\ref{eq:strg}). For $0 < \epsilon \leq 1/2$ let
\[ \mathbf V(\epsilon) = \operatorname{Var} \left( v(\omega) \, \big| \, \{ v(\omega') \, : \, \omega' \in \TT^\nu~, \,\|\omega' - \omega\| \geq \epsilon \}\right) \]
be the conditional variance of $v(\omega)$ conditioned on the complement to the $\epsilon$-neighbourhood of $\omega$ (here and forth $\| \cdot \| = \| \cdot \|_\infty$ is the $\ell_\infty$ distance from $0$ on $\TT^\nu$).

\begin{prop}\label{p:var} Assume that there exists a  non-decreasing function   $M:\mathbb R_+ \to \mathbb R_+$ such that
\begin{equation}\label{eq:exM} \int_{t_0}^\infty \frac{\log M(t)}{t^2} dt < \infty~, \quad K = \sum_{\ell \in 2\pi \mathbb Z^\nu} \frac{W(\ell)}{M(\|\ell\|)}  < \infty~.\end{equation}
Then for
\[ 0 < \epsilon \leq \min\left(\frac12, \frac{e}{2} \int_0^\infty \frac{\log M(t)}{t^2} dt \right)\]
the conditional variance $\mathbf V(\epsilon)$ admits the lower bound
\[ \mathbf V(\epsilon)  \geq  \frac{1}{C_\nu K \epsilon^{2\nu} \, M(S^{-1}(\frac2e \epsilon))}~,
\quad \text{where} \quad S(t) = \int_t^\infty \frac{\log M(\tau)}{\tau^2} d\tau~, \quad C_\nu = e^2 2^\nu~.\]
\end{prop}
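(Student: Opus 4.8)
\emph{The plan} is to recognise $\mathbf V(\epsilon)$ as the minimal mean‑square error of a linear predictor of $v(\omega)$ given the values of $v$ outside the $\epsilon$‑ball, to bound this error from below by a single dual random variable (the elementary half of prediction‑theoretic duality), and thereby reduce the estimate to the construction of a function on $\TT^\nu$ supported in a ball of radius $\epsilon$ whose Fourier coefficients decay like $M^{-1/2}$ as far down in frequency as the ``width budget'' $\epsilon$ allows; it is exactly this trade‑off that produces $S$ and its inverse. In detail: by stationarity assume $\omega=0$, let $\mathcal H$ be the Gaussian Hilbert space of the process and $V_\epsilon=\overline{\mathrm{span}}\,\{v(\omega'):\|\omega'\|\ge\epsilon\}$, so $\mathbf V(\epsilon)=\inf_{Y\in V_\epsilon}\esm{(v(0)-Y)^2}$. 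For any $Z\in\mathcal H$ with $\esm{Z^2}=1$ orthogonal to all $v(\omega')$, $\|\omega'\|\ge\epsilon$, Cauchy--Schwarz gives $\esm{Zv(0)}=\esm{Z(v(0)-Y)}\le\esm{(v(0)-Y)^2}^{1/2}$ for every $Y\in V_\epsilon$, hence $\mathbf V(\epsilon)\ge\esm{Zv(0)}^2$. Given a continuous $G\colon\TT^\nu\to\mathbb R$ supported in $\overline{B_\epsilon}=\{\|\omega\|\le\epsilon\}$, I would take for $Z$ the element of $\mathcal H$ built from the Fourier expansion of $G$ by replacing the harmonic at frequency $\ell\in2\pi\mathbb Z^\nu$ with the matching pair of generating Gaussians rescaled by $W(\ell)^{1/2}$; then $\esm{Zv(\omega')}=G(\omega')$ (so $Z\perp v(\omega')$ for $\|\omega'\|\ge\epsilon$) and $\esm{Z^2}\le 2\sum_{\ell}W(\ell)|\widehat G(\ell)|^2$, whence after normalising
\[ \mathbf V(\epsilon)\ \ge\ \frac{|G(0)|^2}{2\,\sum_{\ell\in2\pi\mathbb Z^\nu}W(\ell)\,|\widehat G(\ell)|^2}\qquad\text{for every such }G. \]

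\emph{The test function.} I would take $G=g^{\otimes\nu}$, i.e.\ $G(\omega)=\prod_{i=1}^\nu g(\omega_i)$, with $g\colon\TT\to\mathbb R_{\ge0}$ the periodisation of the infinite convolution of the normalised indicators $a_j^{-1}\mathbf 1_{[-a_j/2,\,a_j/2]}$, the widths $a_j>0$ chosen with $\sum_j a_j=2\epsilon$. Then $\mathrm{supp}\,g\subseteq[-\epsilon,\epsilon]$ and $\mathrm{supp}\,G\subseteq\overline{B_\epsilon}$ (using $2\epsilon\le1$ and $\sum a_j<\infty$ for convergence); $g$ is symmetric and unimodal with $\int_\TT g=1$, so $g(0)=\max g\ge(2\epsilon)^{-1}$ and $G(0)\ge(2\epsilon)^{-\nu}$; and $|\widehat G(\ell)|\le\prod_i|\widehat g(\ell_i)|$ with $|\widehat g(m)|\le\prod_j\min\!\big(1,(\pi a_j|m|)^{-1}\big)$, a majorant non‑increasing in $|m|$. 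The design problem is to choose the $a_j$, under $\sum a_j=2\epsilon$, so that $\sum_{j:\pi a_j t\ge1}\log(\pi a_j t)\ge w(t):=\tfrac12\log^+\!\big(M(t)/M(T)\big)$ with the cutoff $T=T(\epsilon)$ as small as possible. Writing the left side as $\int_T^t s^{-1}n(s)\,ds$, with $n$ the counting function of the thresholds $1/(\pi a_j)$, and integrating by parts, realising the profile $w$ costs $\sum a_j=\pi^{-1}\!\int_T^\infty t^{-2}w(t)\,dt=(2\pi)^{-1}\big(S(T)-T^{-1}\log M(T)\big)\le(2\pi)^{-1}S(T)$ --- this is the classical Ingham--Paley--Wiener construction of compactly supported functions with prescribed Fourier decay (cf.\ Koosis, \emph{The Logarithmic Integral}). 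The hypothesis $\int t^{-2}\log M(t)\,dt<\infty$ is precisely what makes this run: it forces $\log M(t)=o(t)$, so the convolution has the asserted support, and it renders $S$ finite, decreasing, and invertible on $(0,S(0+))$ --- the range to which the hypothesis on $\epsilon$ confines $\tfrac2e\epsilon$. Equating $\sum a_j=2\epsilon$ selects $T(\epsilon)$ with $S(T(\epsilon))\ge4\pi\epsilon\ge\tfrac2e\epsilon$, hence $T(\epsilon)\le S^{-1}\!\big(\tfrac2e\epsilon\big)$.

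\emph{Conclusion, and where the difficulty lies.} With this $G$ one has $|\widehat G(\ell)|^2\le\min\!\big(1,M(T)/M(\|\ell\|)\big)$ (using $\|\cdot\|=\|\cdot\|_\infty$ and the decay in the largest coordinate), and since $M$ is non‑decreasing --- and $\ge1$ without loss of generality --- $W(\ell)\le M(T)\,W(\ell)/M(\|\ell\|)$ for $\|\ell\|\le T$, so
\[ \sum_{\ell}W(\ell)\,|\widehat G(\ell)|^2\ \le\ \sum_{\|\ell\|<T}W(\ell)\ +\ M(T)\!\!\sum_{\|\ell\|\ge T}\!\!\frac{W(\ell)}{M(\|\ell\|)}\ \le\ 2\,M(T)\,K. \]
Feeding this and $G(0)\ge(2\epsilon)^{-\nu}$ and $M(T(\epsilon))\le M\big(S^{-1}(\tfrac2e\epsilon)\big)$ into Step 1 gives a bound of the advertised shape, $\mathbf V(\epsilon)\ge\big(C_\nu\,K\,\epsilon^{2\nu}\,M(S^{-1}(\tfrac2e\epsilon))\big)^{-1}$ with $C_\nu$ depending only on $\nu$. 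I expect the construction step to be the real obstacle: it is where a support constraint of linear size $\epsilon$ must be converted into the sharp Fourier‑decay profile truncated at the optimal frequency $S^{-1}(O(\epsilon))$, and where the logarithmic‑integral hypothesis enters essentially. Pinning the constant down to the stated $C_\nu=e^2\,2^\nu$ (and the precise $\tfrac2e$) would require, in addition, a less wasteful choice of $G$ --- a radial rather than tensorial bump, or a more careful allocation of the width budget among coordinates --- which I would carry out only once the structural estimate is secured.
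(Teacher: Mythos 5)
Your plan is essentially the paper's own proof: your dual-variable step is exactly Karhunen's formula (Lemma~\ref{l:kar}, which the paper likewise proves by Cauchy--Schwarz duality against a test function supported in the $\epsilon$-ball), and your test function --- an infinite convolution of normalised indicators with total width $2\epsilon$, widths allocated against the profile $\tfrac12\log M$ and cutoff $T$ fixed by $S(T)\asymp\epsilon$ --- is precisely the Paley--Wiener/Levinson construction of Proposition~\ref{prop:fdec}, combined in Section~\ref{s:var} in the same way (the $\sqrt{M}$ device and $g(0)\ge(2\epsilon)^{-\nu}$). The only deviations are at the level of constants and of leaving the discretisation of the width allocation implicit (the paper does it via the scales $R_j$ with one factor per $e$-folding of $M$, and avoids your factor $2$ by complexifying the process), so the argument is sound and delivers the stated bound up to the value of $C_\nu$, which you acknowledge.
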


\begin{rmk} The asymptotic behaviour of $\mathbf V(\epsilon)$ as $\epsilon \to + 0$ is an aspect of the interpolation problem for stationary Gaussian processes, going back to  \cite{Kolm}. The interpolation problem was studied, for the  $\nu = 1$ case of the full-space process
\begin{equation}\label{eq:contproc}\tilde v (\xi) = \int_{\mathbb{R^\nu}} \frac{\cos \langle \xi, \lambda\rangle dB_1(\lambda) + \sin \langle \xi, \lambda\rangle dB_2(\lambda)}{\sqrt{(2\pi)^d W(\lambda)}}~, \quad \xi \in \mathbb R^\nu~,\end{equation}
in \cite[\S 4.13 and Ch.\ 6]{DM} (where $B_1$ and $B_2$ are Brownian motions). The connection with the theory of de Branges spaces and Krein strings, established in these works, allows, in particular, to compute $\mathbf V(\epsilon)$ explicitly in several examples. A condition of the form (\ref{eq:exM}) is unavoidable: for sufficiently regular weights $W$, it holds for an appropriately chosen majorant $M$ whenever $V(\epsilon) \not\equiv 0$.

Quantitative bounds for $\mathbf V(\epsilon)$ in the $\nu=1$ case of (\ref{eq:contproc}) were obtained by  \cite{CuzickDupreez}, building on the work   \cite{Cuzick}. When applied to (\ref{eq:contproc}), our method yields marginally weaker bounds for $W(\lambda) \propto |\lambda|^\alpha$ and marginally stronger ones for any faster-growing $W$, particularly, for $W(\lambda) \propto \exp(\|\lambda\|^\zeta)$. Another advantage is that our estimate is somewhat more explicit, and adjusts easily to the process on the torus $\TT^\nu$ (for arbitrary $\nu$), as is required here. On the other hand, it is conceivable that a bound sufficient for Theorem~\ref{t} can be also obtained by the method of \cite{CuzickDupreez}.\end{rmk}

\paragraph{Proof of Theorem~\ref{t}.}
Assume that
\[ c \|\ell\|^{\nu + \delta} \leq W(\ell) \leq C \exp(C \|\ell\|^\zeta)~. \]
Fix $0 < \kappa < \delta$; the lower bound ensures that the realisations of $v$ are almost surely uniformly $\kappa$-H\"older continuous. From the upper bound,
\[ \sum_\ell \frac{W(\ell)}{M(\|\ell\|)} < \infty~, \quad \text{where} \quad M(t) = e^{2 C t^\zeta}~.\]
We apply Proposition~\ref{p:var}:
\[ S(t) = \int_t^\infty \frac{2C \tau^\zeta}{\tau^2} d\tau \leq C_1 t^{-(1-\zeta)}~, \quad
S^{-1} (\epsilon) \leq C_2 \epsilon^{-\frac{1}{1-\zeta}}~, \]
therefore
\[ \mathbf V(\epsilon) \geq \frac{1}{C_3 \exp(C_4 \epsilon^{-\frac{\zeta}{1-\zeta}})}~,\]
i.e.\ $\text{\bf (LIB)}_\eta$ holds with $\eta = \zeta / (1-\zeta)$.
The assumption $\zeta ( A+1) < 1$ ensures that $\eta A  < 1$, hence we can apply Proposition~\ref{t:1}.
\qed

\section{Multiscale analysis: Proof of Proposition~\ref{t:1}}\label{S:mult}
The proof of Proposition~\ref{t:1} is based on multi-scale analysis, originating in the work \cite{FrSp1} on random operators. Our version of the argument, building on \cite{C11,C14}, is organised as follows: a deterministic inductive procedure is established in Proposition~\ref{p:msa} of Section~\ref{s:ind}, and then, in Section~\ref{s:pft1}, we verify that the conditions of Proposition~\ref{p:msa} are satisfied for our random operator (on an event of full probability). The main technical difference compared to the works \cite{C11,C14} is the use of $2L \times L$ rectangles (and more generally $2L \times L \times \cdots \times L$ cuboids) instead of squares and cubes in the induction.

\subsection{Scale induction}\label{s:ind}

In this section, $H$ is a fixed discrete Schr\"odinger operator acting on $\ell_2(\mathbb Z^d)$. For a finite $B \subset \mathbb Z^d$, denote by $H_B$ the restriction of $H$ to $B$, i.e.\ $H_B = P_B H P_B^*$, where $P_B: \ell_2(\mathbb Z^d) \to \ell_2(B)$ is the coordinate projection. For $E \in \mathbb R$, let $G_E[H_B] = (H_B - E)^{-1}$ be the resolvent of $H_B$ at $E$.

The multi-scale induction involves the parameters $m > 0$, $b \in (0,1)$, $\gamma \in (2 - b, \infty)$ and $J \in \mathbb N$, which will be fixed throughout the argument (that is, one may  choose them tailored to the operator $H$). Their r\^oles are as follows:
\begin{itemize}
\item $m$ is a ``mass'', controlling the rate of exponential decay of the Green function in infinite volume;
\item $b$ is responsible for the deterioration of the mass: on the scale $L$, the mass will be $m(1 + L^{-(1-b)})$;
\item  $\gamma$ is responsible for the growth of scales: we fix $L_0$ (the scale of the box used as the induction base), and let $L_{k+1} = \lfloor L_k^\gamma \rfloor$;
\item $J \geq 1$ controls the number of ``resonances''.
\end{itemize}

\begin{df} A box is a product of $d$ intervals: $B = I_1 \times \cdots \times I_d \subset \mathbb Z^d$.  We denote by $\mathfrak B$ the collection of all boxes, and by $\mathfrak B_2$ the collection of sets $b_1 \setminus b_2$, where $b_1, b_2$ are boxes.

A box $R \subset \mathbb{Z}^d$ is called an $L$-rectangle if $d-1$ of the intervals in the product are of cardinality $2L+1$ (i.e.\ of length $2L$) and one is of cardinality $L+1$ (i.e.\ of length $L$).

The boundary of $s \subset \mathbb Z^d$ is the set $\partial s \subset \mathbb Z^d \times \mathbb Z^d$ of pairs $(u, u') \in s \times (\mathbb Z^d \setminus s)$ such that $\|u - u'\|=1$. The projection of $\partial s$ onto the first coordinate is denoted $\partial_{\text{in}} s (\subset s)$.
\end{df}

\begin{df} Given $E \in \mathbb R$, an $L$-rectangle $R$ is called $E$-regular if
\begin{equation}\label{eq:reg}
\forall x, y \in \partial_{\text{in}} R \quad \text{s.t. } \|x - y\| \geq L: \quad
|G_E[H_R](x, y)| \leq e^{-m(L + L^b)}~.
\end{equation}
Otherwise, $R$ is called $E$-singular.

A set $B \subset \mathbb Z^d$ is called $(E, L)$-resonant if there exists $s \in \mathcal B_2 \cap 2^B$ such that $\|G_E[H_s]\| > \exp(\frac{mL^b}{16 J})$; otherwise, $B$ is called $(E, L)$-nonresonant.
\end{df}

\begin{df} Let $J \geq 1$. A collection $\mathfrak S \subset 2^{\mathbb Z^d} \setminus \{\varnothing\}$ is said to be $J$-sparse in $B \subset \mathbb Z^d$ if $\mathfrak S \cap 2^B$ does not contain $J$ pairwise disjoint sets. We colloquially write, for example, ``$E$-resonant $L$-rectangles are $2$-sparse in $s$'' as a shorthand for ``the collection of all $E$-resonant $L$-rectangles is $2$-sparse in the set $s$''.
\end{df}

\begin{prop}\label{p:msa}
For any $m >0$, $b \in (0,1)$, $\gamma \in (2-b, \infty)$ and $J \geq 1$ there exists $L_* = L_*(m,b,\gamma,J,d)$ such that the following holds whenever $L_0 \geq L_*$. Assume that for any $E \subset \mathbb R$
\begin{enumerate}
\item[(1)] for any $k \geq 0$, $(E, L_k)$-resonant $L_{k+1}$-rectangles are $J$-sparse in any $L_{k+2}$-rectangle, and $2$-sparse in the box $[-L_{k+2}, L_{k+2}]^d$;
\item[(2)] $E$-singular $L_0$-rectangles are $J$-sparse in any $L_1$ rectangle.
\end{enumerate}
Then
\begin{enumerate}
\item[(a)] the spectrum of $H$ is pure point;
\item[(b)] for any eigenfunction $\psi$,  $\sup_x |\psi(x)| \exp(\frac{m}{16} \|x\|) < \infty$.
\end{enumerate}
\end{prop}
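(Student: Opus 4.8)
The plan is to run the standard multi-scale induction, carrying through the scales $L_k$ the statement that every $L_k$-rectangle $R$ whose position is "generic" (i.e. not too close to the finitely many exceptional sets allowed by the sparsity hypotheses) is $E$-regular, with the mass deteriorating controllably. More precisely, I would prove by induction on $k$ the following claim: for every $L_{k+1}$-rectangle $R$, the $E$-singular $L_k$-sub-rectangles inside $R$ are $J$-sparse, and consequently the Green function $G_E[H_R]$ satisfies the off-diagonal decay bound $|G_E[H_R](x,y)| \le e^{-m_k \|x-y\|}$ for $\|x-y\| \ge L_k$, where $m_k = m(1 + L_k^{-(1-b)})$ is the running mass (or some equivalent book-keeping). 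The base case $k=0$ is exactly hypothesis (2) together with the definition of $E$-singular. For the inductive step one tiles the larger rectangle $R$ (of scale $\sim L_{k+2}$) by $L_{k+1}$-rectangles of the previous scale, uses the sparsity hypothesis (1) to conclude that only a bounded number ($\le J-1$, or $\le 1$ for the special box $[-L_{k+2},L_{k+2}]^d$) of sub-rectangles are either $(E,L_k)$-resonant or $E$-singular, and then feeds the resolvent identity
\[
G_E[H_R](x,y) = -\sum_{(u,u') \in \partial B} G_E[H_B](x,u)\, G_E[H_R](u',y)
\]
(valid for $x \in B \subset R$, $y \notin B$) along a chain of good sub-boxes from $x$ to $y$. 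Each application of the identity across a "good" (regular and non-resonant) box contributes a factor $e^{-m_k(L_k + L_k^b)}$ against a polynomial surface-area loss $C L_{k+1}^{d-1}$; the bad sub-boxes, being few and separated, are stepped over using the non-resonance bound $\|G_E[H_s]\| \le \exp(\frac{mL_k^b}{16J})$ on sets $s \in \mathfrak B_2$. The choice $\gamma > 2-b$ is what makes the geometric chain long enough that the accumulated surface losses and the finitely many resonant detours are absorbed into the tiny mass increment $L_k^b$, so that $m_{k+1} \le m_k$ stays above $m$; this is the routine but delicate "energy bookkeeping" at the heart of MSA.

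Once the inductive claim is established, part (a) and the quantitative decay (b) follow from a now-standard two-step argument. First, iterating the resolvent identity one more time from finite to infinite volume gives: for every $E$ and every $x$ with $\|x\|$ large, either $x$ lies in one of the boundedly many exceptional $L_k$-rectangles (for all large $k$), which by a Borel–Cantelli-type counting using the $2$-sparsity in $[-L_{k+2},L_{k+2}]^d$ can happen only for $x$ in a "bad" region of zero density, or else $|G_E[H_{B_R}](x, y)| $ decays exponentially in $\|x-y\|$ for a sequence of boxes $B_R \uparrow \mathbb{Z}^d$. Second, I invoke the classical fact (Simon–Wolff / the spectral-theoretic consequence of exponential Green's function decay, as in the original Fröhlich–Spencer scheme and its descendants \cite{FrSp1,C11,C14}) that if a generalised eigenfunction $\psi$ of $H$ at energy $E$ is polynomially bounded (which holds for spectrally a.e.\ $E$ by Schnol's theorem) and the finite-volume Green's functions decay exponentially around some point where $\psi$ does not vanish, then $\psi$ itself decays exponentially at rate $\ge m/16$; summing over the resulting countable set of eigenvalues shows the spectrum is pure point. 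The constant $\frac{m}{16}$ in (b) is exactly what survives after the losses in this last infinite-volume iteration and the resonant-box detours (which cost $\frac{mL^b}{16J}$ each, with at most $O(J)$ of them per scale).

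The main obstacle I anticipate is the geometric/combinatorial heart of the inductive step: arranging the tiling of a $2L_{k+2}\times L_{k+2}\times\cdots$ cuboid by $L_{k+1}$-rectangles so that (i) any chain from an interior point to the boundary passes through enough good boxes, (ii) the bad boxes, being $J$-sparse, can always be circumvented or absorbed into a single $\mathfrak B_2$-set $b_1\setminus b_2$ on which the non-resonance bound applies, and (iii) the anisotropic $2L\times L$ shape is genuinely exploited — this is the stated novelty over \cite{C11,C14}, presumably because the "long" direction guarantees a chain of length $\sim L_{k+1}$ regardless of where the bounded cluster of bad sub-boxes sits, whereas cubes could be "blocked" by a bad box near the centre. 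Getting the sparsity of singular rectangles at the next scale to follow from hypothesis (1) — i.e.\ showing that a non-resonant, mostly-good $L_{k+1}$-rectangle is automatically $E$-regular — requires checking that the $J$ allowed bad sub-boxes cannot conspire to destroy regularity between two boundary points $x,y$ with $\|x-y\|\ge L_{k+1}$, and this is where the interplay between the definitions of $J$-sparse, $(E,L)$-resonant (on $\mathfrak B_2$!), and $E$-regular must be managed with care. The choice of $L_* = L_*(m,b,\gamma,J,d)$ is dictated precisely by making all the "$L_k^b$ beats $C L_{k+1}^{d-1}$ and beats $\log$(number of scales)" inequalities hold simultaneously from the base scale onward.
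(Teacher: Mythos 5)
Your first half (the scale induction) is essentially the paper's: the induction carries the statement that $E$-singular $L_k$-rectangles are $J$-sparse in every $L_{k+1}$-rectangle, and the inductive step combines the resolvent-identity iteration (stepping over the sparse bad sub-boxes via the non-resonance bound, with the $\gamma>2-b$ bookkeeping) with hypothesis (1), which guarantees that among $J$ disjoint candidate singular $L_{k+1}$-rectangles at least one is $(E,L_k)$-nonresonant and hence regular. One caveat: as written, your inductive claim asserts Green-function decay for \emph{every} rectangle of the next scale; this is false and not provable --- decay can only be concluded for $(E,L_k)$-nonresonant rectangles, singular rectangles may persist at every scale, and the induction must carry only the sparsity statement.

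The genuine gap is in the second half, where (a) and (b) are deduced. Your mechanism --- a ``bad region of zero density'' obtained by Borel--Cantelli-type counting, plus a ``classical fact'' that if the finite-volume Green functions decay around a point where $\psi$ does not vanish then $\psi$ decays --- does not work, and as quoted the classical fact is backwards: if all large rectangles containing $x_*$ well inside had decaying Green functions, the bound $|\psi(x_*)|\le \sum_{uu'\in\partial R'}|G_E[H_{R'}](x_*,u)|\,|\psi(u')|\le e^{-mJL_k}$ would force $\psi(x_*)=0$. The paper's argument runs the other way: fix $x_*$ with $\psi(x_*)\neq 0$ (Schnol); then for all large $k$ every $L_{k+1}$-rectangle containing $x_*$ away from its boundary must be $(E,L_k)$-\emph{resonant}; now the hypothesis that resonant $L_{k+1}$-rectangles are $2$-sparse in $[-L_{k+2},L_{k+2}]^d$ --- this is exactly where that part of hypothesis (1) is used --- forces every $L_{k+1}$-rectangle disjoint from a neighbourhood of $x_*$ to be nonresonant, hence regular by the induction, hence $|\psi(x)|\le e^{-mJL_k}$ on the shells $8JL_k\le\|x\|\le L_{k+2}-3L_{k+1}$. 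This yields $\psi\in\ell_2$ (so the spectrum is pure point) and, applying the same Green-function bound once more, the rate $m/16$ in (b). A ``zero-density'' bad set is not enough: the resonant cluster could sit anywhere and drift with $k$ unless it is anchored at the localization centre $x_*$, and without that anchoring neither $\ell_2$-membership nor decay about a fixed centre follows. Also, Simon--Wolff is not available here: the proposition is purely deterministic, with no spectral averaging, so the Schnol-plus-localization-centre argument is the required route, not an optional one.
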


\begin{rmk} The denominator $16$ in (b) can be replaced with any number greater than $1$.
\end{rmk}

In this section we prove Proposition~\ref{p:msa}, which  will be derived from

\begin{prop}\label{p:ind.reg}
For any $m > 0$, $b \in (0,1)$ and $J \geq 1$ the following holds for $L \geq L_*(m,b,J,d)$. Fix $E \in \mathbb R$, and suppose $R'$ is an $L'$-rectangle such that
\begin{enumerate}
\item[(1)] $E$-singular $L$-rectangles are $J$-sparse in $R'$;
\item[(2)] $R'$ is $(E, L)$-nonresonant;
\item[(3)] $L \leq L' \leq \exp(\frac{mL^b}{100 d J})$.
\end{enumerate}
Then
\begin{enumerate}
\item[(a)] for any $x, y \in R'$ with $\| x - y \| \geq 4 J L$
\begin{equation}\label{eq:subh1} |G_E[H_{R'}](x, y)| \leq e^{-\frac{m}{2}\|x-y\|}~; \end{equation}
\item[(b)] if $100 J L^{2-b} \leq L' \leq \exp(\frac{mL^b}{100 \nu J})$, then $R'$ is  $E$-regular.
\end{enumerate}
\end{prop}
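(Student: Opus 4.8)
The engine is the \emph{geometric resolvent identity}: for boxes $R \subseteq R'$, any $x \in R$, $y \in R' \setminus R$, and any $E \notin \operatorname{spec}(H_R)$,
\[ G_E[H_{R'}](x,y) = -\sum_{(u,u') \in \partial R} G_E[H_R](x,u)\, G_E[H_{R'}](u',y)~. \]
Hypothesis~(2) gives $\|G_E[H_s]\| \leq \exp(\tfrac{mL^b}{16J})$ for every $s \in \mathfrak B_2 \cap 2^{R'}$; in particular every such $H_s$ is invertible, so the identity applies to any sub-box of $R'$ at energy $E$. Before iterating it, I would extract a combinatorial consequence of hypothesis~(1): a maximal pairwise disjoint family of $E$-singular $L$-rectangles inside $R'$ has at most $J-1$ members, and every $E$-singular $L$-rectangle meets one of them, hence lies in its $C_d L$-neighbourhood. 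Thus the union $\mathcal S$ of all $E$-singular $L$-rectangles in $R'$ is covered by at most $J-1$ ``clusters'', each of diameter $\leq C_d L$, and — the point we shall use — every $L$-rectangle $R \subseteq R'$ with $R \not\subseteq \mathcal S$ is $E$-regular.

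\emph{Part (a).} Fix $x,y \in R'$ with $\|x-y\| \geq 4JL$. The plan is to propagate the resolvent identity from $x$ towards $y$ along a ``corridor'' of overlapping regular $L$-rectangles. This is exactly where the elongated $2L \times L \times \cdots \times L$ shape is used: orienting a rectangle with its long axes transverse to the direction of travel, one keeps the current vertex near the centre of a short face of a regular $L$-rectangle contained in $R'$ and still advances by a unit of the short axis at each step, so that the ``forward'' boundary terms of that rectangle (all but a few) are at distance $\geq L$ from the current vertex and are bounded by $e^{-m(L+L^b)}$ via $E$-regularity, while the remaining ``local/backward'' boundary terms are bounded by $\exp(\tfrac{mL^b}{16J})$ via nonresonance and, corresponding to steps of length $\leq C_d L$, are re-expanded and absorbed by a summation argument. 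Whenever the corridor would enter $\mathcal S$ or approach $\partial R'$, one instead peels off a single $\mathfrak B_2$-set containing the obstruction: this costs a factor $\exp(\tfrac{mL^b}{16J})$ while advancing $\leq C_dL$, and there are at most $J-1$ such obstructions. Each genuine regular step contributes $e^{-m(L+L^b)}$, which dominates the accompanying polynomial boundary factor $C_dL^{d-1}$ once $L \geq L_*(m,b,J,d)$; since $\|x-y\| \geq 4JL$, the number of regular steps is at least $\|x-y\|/(C_dL) - C_d J$, and the at most $J$ penalty factors $\exp(\tfrac{mL^b}{16J})$ (product $< \exp(\tfrac{mL^b}{16})$) are swallowed. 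Collecting the factors along the corridor yields $|G_E[H_{R'}](x,y)| \leq e^{-\frac m2\|x-y\|}$; the factor $\tfrac12$ is precisely the slack absorbing the losses near the threshold $4JL$ (for $\|x-y\| \gg JL$ the same computation gives rate $m(1-o(1))$).

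\emph{Part (b).} Now $x,y \in \partial_{\text{in}} R'$ and $\|x-y\| \geq L' \geq 100 J L^{2-b}$. I would run the same corridor argument, with two refinements: (i) starting from $x \in \partial_{\text{in}} R'$, so that the first peeled rectangle hugs a face of $R'$ and the would-be backward boundary term does not occur at all; and (ii) keeping the $L^b$-buffers instead of discarding them — over the $\asymp L'/L$ regular steps they accumulate to $\asymp m L' L^{b-1}$. The hypothesis $L' \geq 100JL^{2-b}$ is exactly what forces this accumulated buffer to dominate the target buffer $m(L')^b$ together with the $O(JmL)$ penalties and the accumulated polynomial factors (while the upper bound $L' \leq \exp(\tfrac{mL^b}{100dJ})$ keeps the polynomial/combinatorial factors subdominant). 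Hence $|G_E[H_{R'}](x,y)| \leq e^{-m(L'+(L')^b)}$ for all such $x,y$, i.e.\ $R'$ is $E$-regular.

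The main obstacle is the geometric bookkeeping of Part~(a): routing the corridor of regular $L$-rectangles so as to make definite progress towards $y$ while keeping the current vertex deep enough inside each rectangle for $E$-regularity to govern (essentially) all of its boundary terms — the precise point at which rectangles beat cubes — and disposing of the $\leq J-1$ singular clusters and of $\partial R'$ without letting the nonresonance penalties pile up. For Part~(b) the further obstacle is the numerology confirming that the accumulated $L^b$-buffer exceeds the $(L')^b$-buffer, which is what the condition $L' \geq 100JL^{2-b}$ is designed to guarantee.
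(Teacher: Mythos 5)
Your proposal is essentially the paper's own argument: the paper also marches from $x$ to $y$ by iterating the geometric resolvent identity, peeling off one $L$-strip at a time, with the current vertex at the centre of a large face of a transversely-oriented $L$-rectangle so that $E$-regularity (combined, via nonresonance, with a strip version of the regularity bound) controls the forward boundary terms, while the at most $J-1$ singular steps guaranteed by $J$-sparsity are paid for by the nonresonance factor $\exp(\frac{mL^b}{8J})$, and the same numerology ($L'\geq 100JL^{2-b}$ to let the accumulated $L^b$-buffers beat $(L')^b$ plus the $O(JL)$ losses, the upper bound on $L'$ to tame polynomial factors) closes parts (a) and (b). The only difference is packaging — the paper formalises your ``corridor'' as two peeling lemmas (Lemmata~\ref{l:1} and \ref{l:2}) rather than your cluster decomposition of the singular set — so this counts as the same route, correctly executed at the level of a sketch.
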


\begin{proof}[Proof of Proposition~\ref{p:msa}]

First, we fix $E$ and prove by induction that, for any $k \geq 0$, $E$-singular $L_k$-rectangles are $J$-sparse in any $L_{k+1}$-rectangle. By the second assumption, this property holds for $k = 0$. Assume that the property holds for some $k$ and fails for $k+1$. Then there is an $L_{k+2}$-rectangle $R''$ containing $J$ disjoint singular $L_{k+1}$-rectangles $R'_j$, $j=1,\cdots,J$. By the induction hypothesis, $E$-singular $L_k$-rectangles are $J$-sparse in each of the $R'_j$. By the first assumption, at least one of them, say, $R_1'$, is $(E, L_k)$-nonresonant. Also, if $L_0$ is large enough, then $L = L_k$ and $L' = L_{k+1} = \lfloor L^\gamma \rfloor$ satisfy the inequalities
\[ 100J L^{2-b} \leq L' \leq \exp(\frac{mL^b}{100 d J})~. \]
Thus $R'_1$ satisfies all the conditions of  part (b) of Proposition~\ref{p:ind.reg}, and is therefore $E$-regular, in contradiction to our assumption.

Second, we show that for any $E$ and $k \geq 0$, and any $(E, L_k)$-nonresonant $L_{k+1}$ rectangle $R'$,
\begin{equation}\label{eq:step2}\forall x,y \in R': \quad \left( \|x-y\| \geq 4 JL_k \,\, \Longrightarrow \,\,
|G_E[H_{R'}](x, y)| \leq \exp(-\frac{m}{2} \|x-y\|)\right)~.\end{equation}
This follows from part (a) of Proposition~\ref{p:ind.reg}, using the first step of the current proof to verify the first condition of the proposition.

Now we are in position to prove the proposition. Schnol's lemma \cite{Ber} implies that for almost any $E$ with respect to the spectral measure of $H$ there exists a non-trivial formal solution $\psi$ of the eigenfunction equation $H \psi = E \psi$ such that $|\psi(x)| \leq (\|x\|+1)^d$. By the first assumption, $(E, L_k)$-resonant $L_{k+1}$-rectangles are $2$-sparse in the box $[-L_{k+2}^d, L_{k+2}^d]$.
By the second step of the current proof, any $(E,L_k)$-nonresonant $L_{k+1}$-rectangle $R'$ satisfies (\ref{eq:step2}),
hence for any point $x \in R'$ with $\operatorname{dist}(x, \partial_{\text{in}}R') \geq 4 JL_k$
\begin{equation}\label{eq:subhbd}\begin{split}
 |\psi(x)| &\leq \sum_{uu' \in \partial R'} |G_E[H_{R'}](x, u)| |\psi(u')| \\
&\leq (3L_{k+1})^d e^{-2mJL_k} (1 + L_{k+2})^d \leq e^{-mJL_k}~. \end{split}\end{equation}
The right-hand side of (\ref{eq:subhbd}) tends to zero as $k \to \infty$. Fix a point $x_*$ such that $\psi(x_*) \neq 0$, then for $k \geq k_0 = k_0(x_*)$ the inequality has to fail, i.e.\ every $L_{k+1}$-rectangle $R' \ni x_*$ such that $\operatorname{dist}(x_*, \partial_{\text{in}} R') \geq 4 J L_k$ has to be $(E, L_k)$-resonant.\footnote{We may assume that for all $k$ $L_{k+1} \geq (10 J)^{100}  L_k$.}

Let $\tilde R' \subset [-L_{k+2}, L_{k+2}]^d \setminus [x_* - 4 JL_k, x_* + 4 J L_k]^d$ be an $L_{k+1}$-rectangle. Then there exists an $L_{k+1}$-rectangle $R'$ disjoint from $\tilde R'$
such that $R' \ni x_*$ and $\operatorname{dist}(x_*, \partial_{\text{in}} R') \geq 4J L_k$. As $R'$ is $(E, L_k)$-resonant, we conclude that $\tilde R'$ is $(E, L_k)$-nonresonant.
This implies that
\begin{equation}\label{eq:bd.shells}\forall k \geq k_0(x_*) \,\, \forall x\,\, \left( \|x\| \in [8JL_k, L_{k+2} - 3 L_{k+1}] \,\, \Longrightarrow \,\,  |\psi(x)| \leq   e^{-mJL_k}\right)~.\end{equation}
In particular, $\psi$ lies in $\ell_2(\mathbb Z^d)$. This holds for every $\psi$, hence the spectrum of $H$ is pure point.

Consider the function $\phi(x) = |\psi(x)|e^{\frac{m}{16}\|x\|}$. From (\ref{eq:bd.shells}), $\phi$ is bounded by $1$ on the set
\[ \bigcup_{k \geq k_0} \left\{x \in \mathbb Z^d \, \mid \, \|x\|\in[8JL_k, 16JL_k]\right\}~.\]
Applying the first inequality in (\ref{eq:subhbd}), we obtain that $\phi$ is bounded by 1 on $\{ \|x\| \geq 8J L_{k_0}\}$. Thus $\phi$ is bounded, as claimed.
\end{proof}

The proof of Proposition~\ref{p:ind.reg} relies on two lemmata. The first one asserts that the Green function $G_E[H_R]$ in (\ref{eq:reg}) can be replaced with $G_E[H_S]$ for $S \supset R$, as long as  $x$ is not very close to the boundary of $R$ in $S$ (in particular, it is required that $x \in \partial_{\text{in}} R \cap \partial_{\text{in}} S$). The following definition will be convenient:

\begin{df}
Let $B$ be a box. An $L$-strip $S \subset B$ is a product $S = I_1' \times \cdots \times I_d'$ of intervals, where $I_j' = I_j$ for $j \neq j_0$, and $\# I_{j_0}' = L$. A set is called a strip if it is an $L$-strip for some value of $L$.
\end{df}

\begin{lemma}\label{l:1}
In the setting of Proposition~\ref{p:ind.reg}, let $R \subset R'$ be an $E$-regular $L$-rectangle, and let $R \subset S \subset R'$ be a strip (see Figure~\ref{fig:lemma1}). Then
\begin{equation}
\forall x, y \in \partial_{\text{in}} R  \text{ s.t.} \operatorname{dist}(x, \{y\} \cup (S \setminus R)) \geq L: \,
|G_E[H_S](x, y)| \leq e^{-m(L + \frac12 L^b)}~.
\end{equation}
\begin{figure}
\begin{center}
\begin{tikzpicture}
    \clip (-1,-.7) rectangle (10cm,4.1cm);

    \fill [color=gray!30!white ] (-.1,-.1) rectangle (9.1,1.1) ;
    \draw [draw=red,line width=0.3mm] (-.12,-.12) rectangle (9.12,1.12) ;
	\node[text=red] at (-.5,.5) {$S$};

	\draw [draw=blue,line width=0.3mm] (4.78,-.22) rectangle (7.32,1.22) ;
	\fill [color=gray!80!white] (4.8,-.2) rectangle (7.3,1.2) ;
	\node[text=blue] at (6,-.5) {$R$};
	\node[text=purple] at (5.5,-.5) {$x$};
	\draw[style=help lines,line width=0.3mm] (0,0) grid[step=.5cm] (9,4);
	\node[text=black] at (-.5,2) {$R'$};
	
    \foreach \x in {0,1,...,18}
    {
        \foreach \y in {0,1,...,8}
        {
            \node[draw,circle,inner sep=2pt,fill] at (.5*\x,.5*\y) {};
        }
    }
    \draw [draw=black!80!white,line width=0.3mm] (4.8,-.2) rectangle (7.3,1.2) ;
	\draw[draw=purple, line width=0.3mm]  (5.5,0) circle (7pt);
\end{tikzpicture}
\end{center}
\caption{Illustration to Lemma~\ref{l:1}. In this case $d = 2$, $L = 2$ and $L' = 8$; $y$ can be any vertex on $\partial_{\text{in}} R$ except for $x$ and the two vertices adjacent to it.}\label{fig:lemma1}
\end{figure}
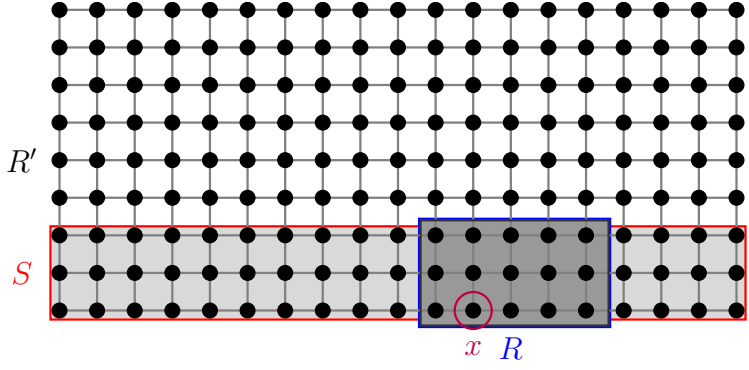
\end{lemma}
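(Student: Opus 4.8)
The plan is to pass from $G_E[H_R]$ to $G_E[H_S]$ by a single application of the geometric resolvent identity across the boundary of $R$ inside $S$, to bound the leading term by the $E$-regularity of $R$, and to bound the correction using the $(E,L)$-non-resonance of $R'$, letting the deliberately wide gap between $L^b$ and $\tfrac12 L^b$ in the exponent absorb the combinatorial and sub-exponential losses. Concretely, I would first write $H_S = (H_R \oplus H_{S\setminus R}) + \Gamma$, where $\Gamma$ is the finite-rank part of $H_S$ consisting of the hoppings between $R$ and $S\setminus R$; the resolvent identity then gives, for $x,y\in\partial_{\text{in}}R$,
\[
 G_E[H_S](x,y) = G_E[H_R](x,y) - \sum_{\substack{v\in\partial_{\text{in}}R,\ w\in S\setminus R\\ \|v-w\|=1}} G_E[H_R](x,v)\, G_E[H_S](w,y)~,
\]
the sum running over the faces of $R$ along which $S$ extends beyond $R$ (so $v$ ranges over those faces and $w$ over its outward neighbours in $S$; there are at most $C_d L^{d-1}$ such pairs). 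For the leading term, the hypothesis $\operatorname{dist}(x,\{y\}\cup(S\setminus R))\ge L$ gives $\|x-y\|\ge L$, so the $E$-regularity of $R$, i.e.\ (\ref{eq:reg}), yields $|G_E[H_R](x,y)|\le e^{-m(L+L^b)}$.

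For the correction I would argue as follows. The same hypothesis confines $x$ to a face of $R$ that also bounds $S$ (since any neighbour of $x$ outside $R$ that lay in $S$ would put $x$ at distance $1<L$ from $S\setminus R$), and it keeps $x$ at graph-distance $\ge L$ from the whole face across which $S$ was extended; hence $E$-regularity again gives $|G_E[H_R](x,v)|\le e^{-m(L+L^b)}$ for every exit point $v$. On the other hand $S$ is a box, so $S = S\setminus\varnothing \in \mathfrak B_2$ and $S\subseteq R'$; since $R'$ is $(E,L)$-non-resonant, $\|G_E[H_S]\|\le \exp(\tfrac{mL^b}{16J})$, whence $|G_E[H_S](w,y)|\le \exp(\tfrac{mL^b}{16J})$. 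Combining these bounds with the count of exit pairs,
\[
 |G_E[H_S](x,y)| \le e^{-m(L+L^b)}\Big(1 + C_d L^{d-1}\exp\big(\tfrac{mL^b}{16J}\big)\Big) \le e^{-m(L+\frac12 L^b)}
\]
as soon as $L\ge L_*(m,b,J,d)$, because $\log(C_d L^{d-1})$ is negligible next to $L^b$ and $\tfrac1{16J}\le\tfrac1{16}<\tfrac12$.

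The step needing the most care is the geometric bookkeeping in the correction term: one must check that the buffer $\operatorname{dist}(x,S\setminus R)\ge L$ genuinely places $x$ at distance $\ge L$ from every boundary point $v$ at which one exits $R$ (this is the crux of the geometry — note that a factor $G_E[H_R](x,v)$ controlled only by the crude non-resonance bound would be far too large), possibly after absorbing a thin boundary layer of this buffered region, together with the one-box-minus-one-box set $S\setminus R$, into the freedom of taking $L_*$ large and the slack $L^b\rightsquigarrow\tfrac12 L^b$. Everything else is the standard resolvent-expansion bookkeeping and a comparison of exponents.
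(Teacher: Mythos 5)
Your proposal is correct and follows essentially the same route as the paper: a single geometric resolvent identity across $\partial R$ inside $S$, the $E$-regularity of $R$ to bound both $G_E[H_R](x,y)$ and the exit factors $G_E[H_R](x,v)$ (using the buffer $\operatorname{dist}(x,S\setminus R)\geq L$), the $(E,L)$-non-resonance of $R'$ applied to $S$ to bound $G_E[H_S](w,y)$ by $\exp(\frac{mL^b}{16J})$, and absorption of the polynomial boundary count into the slack between $L^b$ and $\tfrac12 L^b$ for $L\geq L_*(m,b,J,d)$. No substantive differences to report.
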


\begin{proof}
By assumption (2), the rectangle $R'$ is $(E,L)$-nonresonant, hence by the resolvent identity
\[ \begin{split}
|G_E[H_S](x, y)|
&\leq |G_E[H_R](x, y)| + \sum_{uu' \in \partial R \setminus \partial S} |G_E[H_R](x, u)| |G_E[H_S](u', y)| \\
&\leq \exp(-m (L + L^b)) \left[ 1 + (CL)^{d - 1} \exp(\frac{mL^b}{16 J}) \right] \\
&\leq \exp(-m(L + \frac12 L^b))
\end{split}\]
if $L$ is sufficiently large, $L \geq L_*(m,b,J,d)$.
\end{proof}

\begin{lemma}\label{l:2} In the setting of Proposition~\ref{p:ind.reg}, suppose $B \subset R'$ is a box. Let $x, y \in \partial_{\text{in}} B$, and let $S \subset B$ be an $L$-strip such that $x \in \partial_{\text{in}} S$ and $y \notin S$. Construct an $L$-rectangle $R \subset S$ as in Figure~\ref{fig:lemma2}, left, so that $x$ is the centre of a large face of $R$ (if $x$ is close to the boundary of $S$, align $R$ with the boundary, as in Figure~\ref{fig:lemma2}, right).
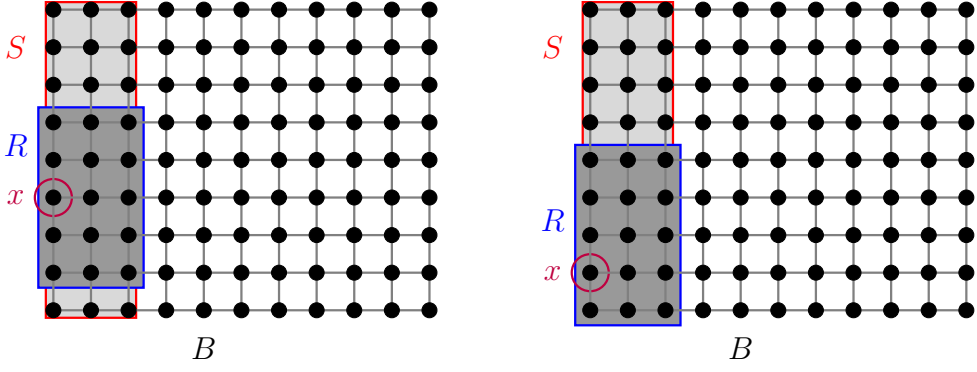
\begin{figure}
\begin{center}
\begin{tikzpicture}
    \clip (-1,-.7) rectangle (6cm,5cm);
    \fill [color=gray!30!white ] (-.1,-.1) rectangle (1.1,4.1) ;
	\draw [draw=red,line width=0.3mm] (-.1,-.1) rectangle (1.1,4.1) ;
	\node[text=red] at (-.5,3.5) {$S$};
	\fill [color=gray!80!white]  (-.2,.3) rectangle (1.2,2.7);
	\draw [draw=blue,line width=0.3mm] (-.2,.3) rectangle (1.2,2.7) ;
	\node[text=blue] at (-.5,2.2) {$R$};
	\draw[draw=purple, line width=0.3mm]  (0,1.5) circle (7pt);
	\node[text=purple] at (-.5,1.5) {$x$};
	\draw[style=help lines,line width=0.3mm] (0,0) grid[step=.5cm] (5,4);
	\node[text=black] at (2,-.5) {$B$};
	
    \foreach \x in {0,1,...,10}
    {
        \foreach \y in {0,1,...,8}
        {
            \node[draw,circle,inner sep=2pt,fill] at (.5*\x,.5*\y) {};
        }
    }
\end{tikzpicture}
\begin{tikzpicture}
    \clip (-1,-.7) rectangle (6cm,5cm);
     \fill [color=gray!30!white ] (-.1,-.1) rectangle (1.1,4.1) ;
	\draw [draw=red,line width=0.3mm] (-.1,-.1) rectangle (1.1,4.1) ;
	\node[text=red] at (-.5,3.5) {$S$};
	\fill [color=gray!80!white]  (-.2,-.2) rectangle (1.2,2.2) ;
	\draw [draw=blue,line width=0.3mm] (-.2,-.2) rectangle (1.2,2.2) ;
	\node[text=blue] at (-.5,1.2) {$R$};
	\draw[draw=purple, line width=0.3mm]  (0,.5) circle (7pt);
	\node[text=purple] at (-.5,.5) {$x$};
	\draw[style=help lines,line width=0.3mm] (0,0) grid[step=.5cm] (5,4);
	\node[text=black] at (2,-.5) {$B$};
	
    \foreach \x in {0,1,...,10}
    {
        \foreach \y in {0,1,...,8}
        {
            \node[draw,circle,inner sep=2pt,fill] at (.5*\x,.5*\y) {};
        }
    }
\end{tikzpicture}
\end{center}
\caption{Illustration to Lemma~\ref{l:2}: $d = 2$, $L = 3$. Note that the strip $S$ could also be horizontal.}
\label{fig:lemma2}
\end{figure}
Then
\begin{enumerate}
\item if $R$ is regular, then
\[ |G_E[H_B](x, y)| \leq e^{-m(L+ \frac13 L^b)} \max_{vv' \in \partial S \setminus \partial B} |G_E[H_{B \setminus S}](v', y)|~;\]
\item if $R$ is singular, then
\[ |G_E[H_B](x, y)| \leq e^{+ \frac{mL^b}{8J}} \max_{vv' \in \partial S \setminus \partial B} |G_E[H_{B \setminus S}](v', y)|~. \]
\end{enumerate}
\end{lemma}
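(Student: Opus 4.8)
The plan is to peel the strip $S$ off $B$ with the geometric resolvent identity, which isolates the factor $\max_{vv'\in\partial S\setminus\partial B}|G_E[H_{B\setminus S}](v',y)|$ and reduces the estimate to controlling $G_E[H_B]$ from $x$ to the interface; this residual Green function is then handled using regularity of $R$ in the first case and, in the second, directly from non-resonance. Concretely: write $H_B = (H_S \oplus H_{B\setminus S}) + T$, where $T$ carries the edges joining $S$ to $B\setminus S$, and apply $G_E[H_B] = G_E[H_0] - G_E[H_B]\,T\,G_E[H_0]$ with $H_0 = H_S \oplus H_{B\setminus S}$. Since $x \in S$ and $y \notin S$, the leading term drops out and only interface edges survive, so
\[ G_E[H_B](x,y) = -\sum_{\substack{(u,w)\in\partial S\setminus\partial B\\ u\in S,\ w\in B\setminus S}} G_E[H_B](x,u)\, G_E[H_{B\setminus S}](w,y)~. \]
There are at most $(CL')^{d-1}$ such edges (the two thin faces of $S$, each a $(d-1)$-box of side $\le 2L'$ because $S \subset B \subset R'$), so it suffices to bound $\sum_u |G_E[H_B](x,u)|$ by $e^{-m(L+\frac13 L^b)}$ when $R$ is regular and by $\exp(\frac{mL^b}{8J})$ when $R$ is singular.

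When $R$ is singular there is nothing to do with $R$: $B$ is a box, hence $B \in \mathfrak B_2 \cap 2^{R'}$, so non-resonance of $R'$ gives $\|G_E[H_B]\| \le \exp(\frac{mL^b}{16J})$, and the bound follows after multiplying by the $(CL')^{d-1}$ interface terms and invoking $L' \le \exp(\frac{mL^b}{100dJ})$. When $R$ is regular I would run the resolvent identity a second time, now peeling $R$ off $B$: for each interface point $u$,
\[ G_E[H_B](x,u) = G_E[H_R](x,u)\,\mathbf{1}[u\in R] - \sum_{\substack{(a,a')\in\partial R\setminus\partial B\\ a\in R}} G_E[H_R](x,a)\, G_E[H_B](a',u)~. \]
The point of the construction of $R$ is that $x$ is the centre of a large ($2L$-)face of $R$, and that this face lies inside a thin face of $S$: it is therefore part of $\partial B$ when $S$ abuts the boundary of $B$, and otherwise $R$ has been aligned with $\partial B$ so that again no edge of $\partial R \setminus \partial B$ leaves that face. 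Hence every point of $\partial_{\text{in}} R$ incident to an edge of $\partial R \setminus \partial B$ — the opposite large face together with the $d-1$ pairs of small faces — is at $\ell_\infty$-distance $\ge L$ from $x$, and so is $u$ itself when $u \in R$. Regularity of $R$ then gives $|G_E[H_R](x,u)|, |G_E[H_R](x,a)| \le e^{-m(L+L^b)}$, while non-resonance of $R'$ gives $|G_E[H_B](a',u)| \le \exp(\frac{mL^b}{16J})$ (as $a', u \in B$); summing the $(CL)^{d-1}$ terms and then the $(CL')^{d-1}$ interface points yields
\[ \sum_u |G_E[H_B](x,u)| \le (CL')^{d-1} e^{-m(L+L^b)}\big(1 + (CL)^{d-1}\exp(\tfrac{mL^b}{16J})\big) \le e^{-m(L+\frac13 L^b)} \]
for $L$ large, which closes the first case.

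The one step demanding real care is the geometric claim above: that the rectangle $R$ built around $x$ (centred at $x$, or flush with $\partial B$ when $x$ is near a corner of $S$) has all of its boundary facets that point into $B$ at distance $\ge L$ from $x$. This is exactly where the use of $2L \times L \times \cdots \times L$ rectangles, rather than cubes, is essential, and it is the same distance bookkeeping that underlies Lemma~\ref{l:1}; I expect it to reduce to a short case analysis (generic position versus $x$ near the boundary of $S$). Everything else is routine: the only quantitative demand is that the polynomial factors $(CL)^{d-1}$ and $(CL')^{d-1}$ be absorbed into the gap between $L^b$ and $\frac13 L^b$ (regular case) or between $\frac{mL^b}{16J}$ and $\frac{mL^b}{8J}$ (singular case), which holds for $L_0 \ge L_*(m,b,J,d)$ by virtue of $L' \le \exp(\frac{mL^b}{100dJ})$.
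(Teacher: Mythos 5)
Your argument is correct and is in substance the paper's own proof: both rest on two geometric resolvent expansions (across $\partial S\setminus\partial B$ and $\partial R\setminus\partial B$), regularity of $R$ for the factor $e^{-m(L+L^b)}$, $(E,L)$-non-resonance of $R'$ applied to the sets $B$ and box-minus-box sets of $\mathfrak B_2\cap 2^{R'}$ for the a priori bound $\exp(\frac{mL^b}{16J})$, and absorption of the $(CL)^{d-1}(CL')^{d-1}$ boundary counts via $L'\le \exp(\frac{mL^b}{100dJ})$, with the singular case handled exactly as in the paper. The only difference is organisational: you peel $S$ first and then re-derive the content of Lemma~\ref{l:1} inline (second expansion with $G_E[H_R]$ on the $x$-side, so regularity applies directly), whereas the paper peels $R$ first and quotes Lemma~\ref{l:1}; your distance-$\ge L$ case analysis for the faces of $R$ (the large face through $x$ contributing only $\partial B$-edges, both in the centred and the boundary-aligned configuration) is precisely the bookkeeping the paper's proof relies on, up to the paper's own off-by-one conventions for strip and rectangle widths.
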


\begin{proof} If $R$ is regular, by the resolvent identity,
\[\begin{split} |G_E[H_B](x, y)|
&\leq \sum_{uu' \in \partial R \setminus \partial B} |G_E[H_B](x, u)| |G_E[H_{B \setminus R}] (u', y)| \\
&\leq  \sum_{uu' \in \partial R \setminus \partial B} \sum_{vv' \in \partial S \setminus \partial B} |G_E[H_B](x, u)| |G_E[H_{B \setminus R}](u', v)| |G_E[H_{B \setminus S}](v', y)|~.
\end{split}\]
According to Lemma~\ref{l:1}, $|G_E[H_B](x, u)| \leq e^{-m (L + \frac12 L^b)}$, hence
\[\begin{split}
|G_E[H_B](x, y)|
&\leq (2L)^{\nu - 1} (2L')^\nu e^{-m(L+\frac12L^b)} e^{\frac{mL^b}{8J}} \max_{vv' \in \partial S \setminus \partial B} |G_E[H_{B\setminus S}](v', y)| \\
&\leq e^{-m(L+\frac13 L^b)} \max_{vv' \in \partial S \setminus \partial B} |G_E[H_{B\setminus S}](v', y)|~.
\end{split}\]
If $R$ is singular, we argue similarly, starting from the estimate
\[ |G_E[H_B](x, y)| \leq \sum_{vv' \in \partial S \setminus \partial B} |G_E[H_B](x, v)| |G_E[H_{B\setminus S}](v', y)|~. \]
\end{proof}

\begin{proof}[Proof of Proposition~\ref{p:ind.reg}]
Suppose $x, y \in \partial_{\text{in}} R'$, $\|x - y\|\geq L'$. Iterating Lemma~\ref{l:2}, we obtain
\begin{equation}\label{eq:1}
\begin{split}
|G_E[H_{R'}](x, y)|
&\leq e^{\frac{mL^b}{16 J}} e^{-m(L + \frac13 L^b)(\frac{L'}{L} - J)} e^{\frac{mL^b}{8J}} \\
&\leq \exp \left[ m \left\{ - L' + L^b \left(\frac{1}{5J} + \frac13 J \right) - \frac13 L' L^{b-1} + JL\right\} \right] \\
&\leq \exp \left[ m (- L' - \frac13 L' L^{b-1} + 2 JL)\right]~.
\end{split}\end{equation}
If $L' \geq 100 J L^{2-b}$, then
\[ \frac13 L^{b-1}L' \geq 2J L + L'^{b}~, \]
hence
\[ (\ref{eq:1}) \leq \exp(-m(L' + L'^b))~. \]
For arbitrary $L'$ and $x, y \in R'$ with $\|x - y\| \geq 4 J L$, a similar argument yields
\[ |G_E[H_{R'}](x, y)| \leq e^{-\frac{m}{2}\|x-y\|}~. \]
\end{proof}

\subsection{Wegner estimate, and Proof of Proposition~\ref{t:1}}\label{s:pft1}

Let $H(\omega, \theta;g)$ be an operator of the form
\begin{equation}\label{eq:schr'} (H(\omega, \theta; g) f)(x) = \sum_{\|y-x\|=1} f(y) + g v(T^x \omega, \theta) f(x)~.\end{equation}
We recall our basic assumptions:
\begin{align}
&\text{\bf (UPA)}_A&\quad
&\inf_\omega \min_{0 < \|x\| \leq L} \operatorname{dist}(T^x \omega, \omega) \geq c L^{-A}\\
&\text{\bf (LIB)}_\eta&\quad
&p_\omega (t \, \mid \, \Omega \setminus Q_\epsilon(\omega)) \leq \exp(C \epsilon^{-\eta})~, \quad \epsilon \in (0,1/2]\\
&\text{\bf (NET)}_\nu&\quad
&\min \# (\text{$\epsilon$-net in $\Omega$}) \leq (C/\epsilon)^\nu~, \quad \epsilon \in (0, 1] \\
&\text{\bf (UH\"ol)}_\kappa& \quad &\lim_{R\to\infty} \mathbb P^\Theta(\mathfrak H_R) = 1~,
\end{align}
where $\mathfrak H_R$ is the collection of $\theta \in \Theta$ such that $\|v(\cdot, \theta)\|_\infty \leq R$ and $v(\cdot, \theta)$ is uniformly $\kappa$-H\"older with constant $R$:
\begin{equation}\label{eq:R}\sup_{\omega} |v(\omega, \theta)| + \sup_{\omega' \neq \omega} \frac{|v(\omega', \theta) - v(\omega,\theta)|}{\operatorname{dist}(\omega', \omega)^\kappa} \leq R~.\end{equation}

\begin{prop}\label{p:weg}
Assume that $\text{\bf (UPA)}_A$, $\text{\bf (LIB)}_\eta$, $\text{\bf (NET)}_\nu$ and $\text{\bf (UH\"ol)}_\kappa$ hold with $A\eta < 1$. Let
\[ m = 16~, \quad J = \min ( \mathbb Z \cap (\frac\nu\kappa+1, \infty))~, \]
and choose $b \in (0, 1)$ and $\gamma \in (2-b, \infty)$ so that $A\eta < b/\gamma^2$. Then there exist two measurable  functions $L_{\min}(\omega, \theta)$ and $g_{\min}(\omega, \theta)$ that are $\Theta$-almost-everywhere finite for each $\omega \in \Omega$, such that for $L_0 \geq L_{\min}$, $g \geq g_{\min}$ the assumptions (1)--(2) of Proposition~\ref{p:msa} hold for the operator $H(\omega, \theta; g)$.
\end{prop}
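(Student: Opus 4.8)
The plan is to verify conditions (1) and (2) of Proposition~\ref{p:msa}, each of which is a \emph{sparsity} statement about either resonant rectangles or singular rectangles, by showing that non-sparsity forces a Gaussian-type improbable event whose probability is summable over scales. The mechanism is a \textbf{Wegner-type estimate}: for a fixed finite set $s$ and fixed $E$, the event $\|G_E[H_s]\| > t$ forces the spectrum of $H_s(\omega,\theta;g)$ to be within $1/t$ of $E$; since $H_s$ is a rank-$\#s$ perturbation whose diagonal is $g v(T^x\omega,\theta)$, the (LIB)$_\eta$ hypothesis on the conditional density of $v(\omega,\cdot)$ given its values away from an $\epsilon$-neighbourhood of $\omega$ lets us bound the probability that any eigenvalue lies in an interval of length $1/t$ by roughly $(\#s)\cdot \tfrac1{gt}\exp(C\epsilon^{-\eta})$. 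Here $\epsilon$ is chosen so that the points $\{T^x\omega : x \in s\}$ are $\epsilon$-separated, which by (UPA)$_A$ is possible with $\epsilon \gtrsim (\operatorname{diam} s)^{-A}$; the $\kappa$-H\"older regularity from (UH\"ol)$_\kappa$ controls the off-diagonal dependence of $v$ on nearby shifts and also guarantees there are at most finitely many shifts in a single $\epsilon$-ball (this is where $J > \nu/\kappa + 1$ enters — it is the resonance count $k_{\max}$ of the second remark).

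The next step is to convert this single-box Wegner bound into the sparsity statements. For condition (1): if $J$ disjoint $L_{k+1}$-rectangles inside an $L_{k+2}$-rectangle were all $(E,L_k)$-resonant, then each contributes a set $s_j \in \mathfrak B_2$ with $\|G_E[H_{s_j}]\| > \exp(mL_k^b/(16J))$; since the $s_j$ are disjoint, their defining randomness is conditionally independent given the far-away values, so the joint probability factorises and is bounded by the $J$-th power of a single-box estimate, i.e.\ by $\exp(-c\, g\, \exp(mL_k^b/(16J)))\cdot(\text{polynomial in }L_{k+2})$ after integrating over $E$ in a relevant energy window using a Stollmann-type / spectral-averaging argument to remove the fixed-$E$ restriction. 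For condition (2) one argues identically with $E$-singularity of $L_0$-rectangles: an $L_0$-rectangle that is $E$-singular but $(E,L_0')$-nonresonant would contradict part (b) of Proposition~\ref{p:ind.reg} at the base scale (provided $L_0$ is large and $g$ large so $L_0' = \lfloor L_0^\gamma\rfloor$ fits in the window $[100JL_0^{2-b},\exp(mL_0^b/(100\nu J))]$, which forces the $g$-dependence into $g_{\min}$); so singularity forces resonance, and resonance of $J$ disjoint boxes is again super-exponentially rare. Summing the resulting probabilities over all pairs of scales $(k,k+1)$, over a suitable $\epsilon$-net of energies (using continuity of $G_E$ in $E$ off the spectrum to discretise), and over all positions of rectangles — all of which grow only polynomially in $L_{k+2}$ — gives a convergent series precisely when $A\eta < b/\gamma^2$, by Borel--Cantelli defining $L_{\min}(\omega,\theta)$ as the last scale at which a bad event occurs.

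The main obstacle is the \textbf{uniformity in $\omega$}. The (UPA)$_A$ hypothesis already gives a uniform-in-$\omega$ Diophantine separation, so the $\epsilon \sim (\operatorname{diam} s)^{-A}$ choice is legitimate for every $\omega$; but the conclusion must hold for $H(\omega,\theta;g)$ as an operator on all of $\mathbb Z^d$, with the rectangles ranging over arbitrary positions, so one must make the Wegner estimate uniform over the \emph{location} of the box as well. The point is that a box $B$ located far from the origin involves the shifts $\{T^x\omega : x\in B\}$, which by ergodicity and (UPA)$_A$ still have the required separation (the separation depends only on $\operatorname{diam} B$, not on position), and the conditional density bound (LIB)$_\eta$ is stated uniformly in the base point $\omega\in\Omega$; so the single-box probability bound is genuinely location-independent, and the number of boxes of diameter $\le L_{k+2}$ meeting $[-L_{k+2},L_{k+2}]^d$ is polynomial. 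A secondary subtlety is the passage from fixed $E$ to all $E$: one fixes a compact energy interval containing the spectrum (which lies in $[-2d - gR, 2d+gR]$ on $\mathfrak H_R$), covers it by $\sim g\,t$ points, and uses that resonance/singularity at $E$ implies near-resonance at the nearest net point; the extra factor is polynomial in $g$ and in $t = \exp(mL_k^b/(16J))$, hence harmless. Finally one takes a countable union over $R\in\mathbb N$ using (UH\"ol)$_\kappa$ to conclude $\Theta$-almost-everywhere finiteness of $L_{\min}$ and $g_{\min}$ for each fixed $\omega$.
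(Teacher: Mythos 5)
Your overall skeleton (a Wegner-type bound from $\text{\bf (UPA)}_A$ and $\text{\bf (LIB)}_\eta$ with $\epsilon\sim(\operatorname{diam}s)^{-A}$, discretisation in energy, Borel--Cantelli over scales, and $\text{\bf (UH\"ol)}_\kappa$ to get almost-everywhere finiteness) is the paper's, but there are two genuine gaps. First, assumption (1) of Proposition~\ref{p:msa} requires $J$-sparsity of resonant $L_{k+1}$-rectangles in \emph{every} $L_{k+2}$-rectangle of $\mathbb Z^d$, at arbitrary distance from the origin; for each fixed scale these are infinitely many, and a location-independent probability bound plus a union bound over the polynomially many boxes meeting $[-L_{k+2},L_{k+2}]^d$ (which is all you do) does not give an almost-sure statement over all locations. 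The paper's resolution, in Lemma~\ref{l:weg}, is to prove the estimate \emph{simultaneously for all $\omega\in\Omega$}: one discretises $\Omega$ by a $(4gMR)^{-1/\kappa}$-net (this is where $\text{\bf (NET)}_\nu$ and the H\"older constant $R$ actually enter), pays the entropy factor $(CgMR)^{\nu/\kappa}\approx e^{(\nu/\kappa)L^r}$ on top of the energy-net factor $\approx e^{L^r}$, and this is precisely why one needs $J>\nu/\kappa+1$ disjoint simultaneously resonant sets, so that $e^{-JL^r}$ beats both nets. The all-$\omega$ statement for boxes near the origin is then transferred to arbitrary positions for the fixed $\omega$ by covariance (the restriction of $H(\omega,\theta)$ to a far rectangle equals the restriction of $H(T^{x_0}\omega,\theta)$ to a rectangle near the origin), while a separate fixed-$\omega$, $k=2$ version gives the $2$-sparsity in $[-L_{k+2},L_{k+2}]^d$. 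Your attribution of $J>\nu/\kappa+1$ to ``finitely many shifts in an $\epsilon$-ball'' is not the mechanism, and $\text{\bf (NET)}_\nu$ plays no role in your argument --- a sign that this key idea is missing.

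Second, the base scale. Deriving assumption (2) from part (b) of Proposition~\ref{p:ind.reg} ``at the base scale'' is circular: that proposition deduces regularity of an $L'$-rectangle from $J$-sparsity of singular $L$-rectangles at a strictly smaller scale $L$ (together with nonresonance and the window condition), and at the base scale there is no smaller scale at which its hypothesis (1) could be verified; the induction must be seeded by an independent input. In the paper this input is large coupling: for $g\geq 10^{10}d\,e^{L^r}$, any nonresonant $L_0$-rectangle satisfies $\operatorname{dist}(E,\sigma(H_{R_j}))\geq g e^{-L^r}\geq 10^{10}d$ and is therefore $E$-regular by the Combes--Thomas estimate; combined with the $J$-sparsity of resonant rectangles this yields (2), and it is the true source of $g_{\min}$ --- the window condition $100JL_0^{2-b}\leq\lfloor L_0^\gamma\rfloor\leq\exp(mL_0^b/(100\nu J))$ that you cite involves no $g$ at all. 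Two smaller inaccuracies: $\text{\bf (LIB)}_\eta$ gives a product bound on the joint density of the potential values (by successive conditioning), not conditional independence of disjoint boxes; and the resulting probability is of order $\exp(-(J-\nu/\kappa-1)L^r)$, not the doubly exponential $\exp(-cg\exp(mL_k^b/(16J)))$ you wrote.
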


The proof is based on the following lemma. For $r > 0$, $E \in \mathbb R$, $\omega \in \Omega$ and  $s_1, \cdots, s_k \subset \mathbb Z^d$, define the following  events in $\Theta$:
\begin{align}
&\operatorname{Reson}_{L,r}(s_1, \cdots, s_k; \omega; E) &=&
\left\{ \forall j=1,\cdots,k \, \|G_E[H_{s_j}(\omega,\theta; g)]\| > \frac{e^{L^r}}{g}\right\}\\
&\operatorname{Reson}_{L,r}(s_1, \cdots, s_k; \omega) &=&
\bigcup_{E \in \mathbb R} \operatorname{Reson}_{L,r}(s_1, \cdots, s_k; \omega; E)  \\
&\operatorname{Reson}_{L,r}(s_1, \cdots, s_k) &=&
\bigcup_{\omega \in \Omega} \operatorname{Reson}_{L,r}(s_1, \cdots, s_k; \omega)
\end{align}

\begin{lemma}\label{l:weg} Assume that $\text{\bf (UPA)}_A$, $\text{\bf (LIB)}_\eta$, $\text{\bf (NET)}_\nu$  hold with $A\eta < 1$. Let $m,b,\gamma,J$ be as in Proposition~\ref{p:weg}, and let  $r > A\eta$, $R \geq 1$.\footnote{Eventually, $r$ will be taken to be slightly greater than $A\eta$, however, no upper bound is formally required in the current lemma. $R$ will eventually play the same r\^ole as in (\ref{eq:R}).}  Then
\begin{enumerate}
\item for $k \geq 2$,
\[ \sup_{\omega \in \Omega}\sup_{s_1,\cdots,s_k} \mathbb P^\Theta (\operatorname{Reson}_{L,r}(s_1,\cdots,s_k; \omega) \cap \mathfrak H_R) \leq  R \exp(-(k-1)L^r - o(L^r))~;\]
\item for $k > \frac{\nu}{\kappa}+1$,
\[\sup_{s_1,\cdots,s_k} \mathbb P^\Theta (\operatorname{Reson}_{L,r}(s_1,\cdots,s_k) \cap \mathfrak{H}_R) \leq  R^{\frac\nu\kappa+1} \exp\left(-\big(k-\frac\nu\kappa-1\big)L^r - o(L^r)\right)~,\]
\end{enumerate}
where the supremum in the first formula and the interior one in the second formula are over $k$-tuples of pairwise disjoint subsets of $[-L,L]^d$.
\end{lemma}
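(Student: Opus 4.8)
The plan is to prove Lemma~\ref{l:weg} by reducing the resolvent bounds to a single scalar quantity — the value of the potential $v$ at one fixed site inside each $s_j$ — and then applying the interpolation bound $\text{\bf (LIB)}_\eta$ to control the probability that this value is dangerously close to an eigenvalue of the rest of the operator. First I would fix $\omega \in \Omega$ and a $k$-tuple of pairwise disjoint sets $s_1, \dots, s_k \subset [-L,L]^d$, and fix $E \in \mathbb R$. For each $j$, pick a site $x_j \in s_j$; since the $s_j$ are disjoint, the sites $T^{x_1}\omega, \dots, T^{x_k}\omega$ are pairwise distinct in $\Omega$, and by $\text{\bf (UPA)}_A$ they are in fact $c(2L)^{-A}$-separated. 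The key linear-algebra observation is that if we condition on the values of $v(\cdot,\theta)$ everywhere except on small neighbourhoods $Q_\epsilon(T^{x_j}\omega)$ — where $\epsilon$ is a small fraction of the separation $c(2L)^{-A}$, so that these neighbourhoods are disjoint — then $H_{s_j}(\omega,\theta;g)$ depends on the remaining randomness only through the scalar $t_j = v(T^{x_j}\omega,\theta)$ (if $\epsilon$ is chosen even smaller than the lattice-spacing image, only $x_j$ sees the neighbourhood; this is where the uniform Hölder modulus $R$ and the geometry enter, to guarantee the other sites of $s_j$ are unaffected, or are affected in a controlled way absorbed into the $o(L^r)$). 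Then $\|G_E[H_{s_j}]\| > e^{L^r}/g$ forces $g t_j$ to lie within $g e^{-L^r}$ of some eigenvalue of the rank-one-perturbed matrix, i.e.\ $t_j$ lies in a union of at most $\#s_j \leq (CL)^d$ intervals of total length $\leq (CL)^d e^{-L^r}$.

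Next I would integrate out. Conditionally on the complement of $\bigcup_j Q_\epsilon(T^{x_j}\omega)$, the variables $t_1,\dots,t_k$ have a joint density which, because the neighbourhoods are disjoint and by the Gaussian (more precisely, the general conditional) structure, factorises — or at least the conditional density of each $t_j$ given all the others is bounded by the one-dimensional interpolation density $p_{T^{x_j}\omega}(\cdot \mid \Omega \setminus Q_\epsilon(T^{x_j}\omega)) \leq \exp(C\epsilon^{-\eta})$. Hence the conditional probability that $t_j$ falls in the bad set is at most $(CL)^d e^{-L^r} \exp(C\epsilon^{-\eta})$. With $\epsilon \asymp L^{-A}$ we get $\exp(C\epsilon^{-\eta}) = \exp(CL^{A\eta})$, and since $r > A\eta$ this factor is $e^{o(L^r)}$; on the event $\mathfrak H_R$ we also pick up at most a factor $R$ from the constraint $|v| \le R$ restricting the range of each $t_j$ (and bounding the density's support contribution). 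Multiplying over the $k$ disjoint (hence conditionally independent, or successively-conditioned) sites gives, for a \emph{fixed} $E$,
\[ \mathbb P^\Theta(\operatorname{Reson}_{L,r}(s_1,\dots,s_k;\omega;E) \cap \mathfrak H_R) \leq \left((CL)^d e^{-L^r} e^{CL^{A\eta}}\right)^k \cdot R~, \]
which already beats $\exp(-kL^r + o(L^r))$.

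The remaining, and genuinely delicate, step is the union over $E \in \mathbb R$ and over $\omega \in \Omega$. For the union over $E$: the event $\operatorname{Reson}_{L,r}(\dots;\omega)$ is realised by \emph{some} $E$, but one observes that if $\|G_E[H_{s_j}]\|$ is large then $E$ is within $g e^{-L^r}$ of the spectrum of $H_{s_j}(\omega,\theta;g)$, which on $\mathfrak H_R$ lies in a bounded interval $[-CgR, CgR]$; covering this interval by $O(g^2 R e^{L^r})$ points $E$ spaced $g e^{-L^r}$ apart and using that $\|G\|$ is $e^{2L^r}/g$-Lipschitz-ish in $E$ near each grid point, one replaces the uncountable union by a union of polynomially-in-$(g,e^{L^r})$ many terms, costing only another $e^{o(L^r)}$ factor (this needs one \emph{extra} resolvent bound, say $e^{2L^r}/g$, to close, which is why the statement has the flexibility of $o(L^r)$). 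This sacrifices one factor of $e^{-L^r}$, producing the "$k-1$" in part~(i). For the union over $\omega \in \Omega$ in part~(ii): here $\text{\bf (NET)}_\nu$ is used — the map $\omega \mapsto (T^{x_1}\omega,\dots,T^{x_k}\omega)$ and the $R$-Hölder continuity of $v$ mean that an $\epsilon'$-net in $\Omega$ of size $(C/\epsilon')^\nu$ suffices to control all $\omega$ up to an error $gR(\epsilon')^\kappa$ in the resolvent, which is absorbed by taking $\epsilon' \asymp$ a small power of $e^{-L^r}$; the net has cardinality $e^{(\nu/\kappa)L^r + o(L^r)}$ (with a factor $R^{\nu/\kappa}$), which eats up $\nu/\kappa$ of the available $e^{-kL^r}$ and, after also losing one more factor to the $E$-union, yields the "$k - \nu/\kappa - 1$" and the prefactor $R^{\nu/\kappa+1}$ in part~(ii). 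I expect the main obstacle to be precisely this bookkeeping of how much one "pays" for the uncountable unions over $E$ and $\omega$ — in particular verifying that choosing $\epsilon \asymp L^{-A}$ (forced by $\text{\bf (UPA)}_A$ to keep the neighbourhoods disjoint) together with $r > A\eta$ genuinely renders every auxiliary factor $e^{o(L^r)}$, and that the geometry of the $\epsilon$-neighbourhoods versus the lattice sites $x_j$ really does isolate a single scalar per box rather than coupling several.
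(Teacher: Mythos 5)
Your overall architecture --- a fixed-$(\omega,E)$ estimate obtained from $\text{\bf (UPA)}_A$ and $\text{\bf (LIB)}_\eta$, factorisation over the disjoint $s_j$, then a discretisation of $E$ in an interval of length $\asymp gR$ with mesh $\asymp 1/M$ (costing one factor $e^{L^r}$, hence $k-1$) and a net in $\Omega$ of cardinality $\asymp (CgMR)^{\nu/\kappa}$ with resolvent stability through the H\"older bound on $\mathfrak H_R$ (costing $\nu/\kappa$ factors, hence $k-\nu/\kappa-1$) --- is the same as the paper's, and that bookkeeping is fine. The genuine gap is in the core per-box Wegner step. Conditioning on the process outside $Q_\epsilon(T^{x_j}\omega)$ does reduce the randomness in $H_{s_j}$ to the single scalar $t_j=v(T^{x_j}\omega,\theta)$, entering as a rank-one perturbation $g t_j P_{x_j}$; but $\|G_E[H_{s_j}]\|>e^{L^r}/g$ only says $\operatorname{dist}(E,\sigma(H_{s_j}(t_j)))<g e^{-L^r}$, and to convert this into ``$t_j$ lies in at most $\#s_j$ intervals of total length $\lesssim (CL)^d e^{-L^r}$'' you would need the eigenvalue branches to move in $t_j$ at speed comparable to $g$. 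In fact $\frac{d\lambda_i}{dt_j}=g|\psi_i(x_j)|^2$, which has no lower bound: the near-resonant eigenvector may have tiny (or zero) amplitude at the one site you fixed, and then the bad set of $t_j$ is a long interval --- indeed, as $t_j\to\pm\infty$ the spectrum of $H_{s_j}$ near $E$ converges to that of the compression to $s_j\setminus\{x_j\}$, so if $E$ is close to that spectrum the bad $t_j$-set has infinite measure. Rank-one spectral averaging controls only the diagonal entry $G_E(x_j,x_j)$, not $\|G_E\|$, so the displayed fixed-$E$ bound $\bigl((CL)^d e^{-L^r}e^{CL^{A\eta}}\bigr)^k R$ does not follow from your reduction.

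The paper's proof avoids privileging a single site: from $\text{\bf (UPA)}_A$ (the points $T^x\omega$, $x\in[-L,L]^d$, are $cL^{-A}$-separated) and $\text{\bf (LIB)}_\eta$ it bounds the \emph{joint} density of all values $(v(T^x\omega,\theta))_{x\in B}$ by $\bigl(\exp(C(cL^{-A})^{-\eta})/g\bigr)^{\#B}$, and then runs the usual Wegner argument per box, e.g.
\begin{equation*}
\mathbb P^\Theta\bigl(\operatorname{dist}(E,\sigma(H_{s_j}))\le 1/M\bigr)\le \mathbb E\,\operatorname{Tr}\mathbbm{1}_{[E-1/M,\,E+1/M]}(H_{s_j})\le \frac{\#s_j\,\exp(C_1L^{A\eta})}{gM}\,,
\end{equation*}
where each diagonal term is estimated by spectral averaging in the potential at \emph{that} site conditioned on all the others; the sum over all sites of $s_j$ (the factor $\#s_j\le(3L)^d$) is exactly what replaces your single-scalar reduction, since the near-resonant eigenfunction has mass one somewhere in the box but you do not know where. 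Multiplying over the disjoint $s_j$ and taking $M=\exp(L^r)/(4g)$ gives $\exp(-kL^r+o(L^r))$ at fixed $(\omega,E)$; after that, your treatment of the unions over $E$ and $\omega$ goes through essentially as in the paper.
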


\begin{proof} Fix $\omega \in \Omega$ and $E \in \mathbb R$. From $\text{\bf (UPA)}_A$ and $\text{\bf (LIB)}_\eta\,$, the joint probability density (in $\Theta$) of $(V(x; \omega))_{x \in B}$, $B \subset [-L,L]^d$, is bounded by
\[ \left(\frac{\exp(C(cL^{-A})^{-\eta})}{g}\right)^{\# B}~,\]
therefore by the usual Wegner argument \cite{W81,AWbook}, we obtain  that for $M > 0$
\begin{equation}\label{eq:est1} \begin{split}
&\mathbb P^\Theta \left\{ \forall j=1,\cdots,k \,\, \| G_E[H_{s_j}(\omega,\theta)]\| >  M \right\} \\
&\quad\leq \left( \frac{\exp(C(cL^{-A})^{-\eta})}{gM}\right)^k \prod_{j=1}^k \# s_j
\leq \left( \frac{(3L)^d \exp(C_1 L^{A\eta})}{gM} \right)^k~.
\end{split}
\end{equation}
Let $M = \frac{1}{4g} \exp(L^r)$; then
\[ \text{RHS of (\ref{eq:est1})}  \leq \left[ 4 (3L)^d \exp(C_1 L^{A\eta}- L^r)\right]^k
\leq \exp(-k L^r + o(L^r))~;\]
here and in the sequel the implicit constants are uniform in $s_j$ and $\omega$.
Let $\mathcal N_\Omega$ be an $(4gMR)^{-1/\kappa}$-net in $\Omega$, and $\mathcal N_{\mathbb R}$ -- a $(4M)^{-1}$-net in $[-10dgR, 10dgR]$, chosen so that
\[ \# \mathcal N_\Omega \leq (CgMR)^{\nu/\kappa}~, \quad
\# \mathcal N_{\mathbb R} \leq Cdg M R~. \]
Then
\begin{equation}\label{eq:fixedomega}\begin{split}
&\mathbb P^\Theta \left\{ \exists E \in \mathcal N_R: \,\, \forall j=1,\cdots,k \,\, \| G_E[H_{s_j}(\omega,\theta)]\| \geq M \right\} \\
&\quad\leq CdgMR \exp(-kL^r + o(L^r)) \leq R \exp(-(k-1)L^r + o(L^r))
\end{split}\end{equation}
for any $\omega \in \Omega$, and
\begin{equation}\label{eq:allomega}\begin{split}
&\mathbb P^\Theta \left\{ \exists E \in \mathcal N_R, \, \omega \in \mathcal N_\Omega: \,\, \forall j=1,\cdots,k \,\, \| G_E[H_{s_j}(\omega,\theta)]\| \geq M \right\} \\
&\quad\leq (CgMR)^{\frac{\nu}{\kappa}} R\exp(-(k-1)L^r + o(L^r)) \\
&\quad\leq R^{\frac\nu\kappa+1} \exp(-(k-\frac\nu\kappa-1)L^r + o(L^r))~.
\end{split}\end{equation}
If $\|G_E[H_s(\omega, \theta)]\| \leq M$, $\theta \in \mathfrak H_R$, $|E'-E| \leq \frac{1}{4M}$, and $\operatorname{dist}(\omega', \omega) \leq (4gMR)^{-1/\nu}$, then
\begin{equation}\label{eq:bdgr} \|G_{E'}[H_s(\omega', \theta)]\| \leq 2M~.\end{equation}
Also note that on $\mathfrak H_R$ the bound (\ref{eq:bdgr}) holds for all $|E| \geq 10d g R$: indeed, such energies are at distance $\geq 1$ from the spectrum of $H$,
Therefore (\ref{eq:fixedomega}) and (\ref{eq:allomega}) imply the first and second assertions of the lemma, respectively.
\end{proof}

\begin{proof}[Proof of Proposition~\ref{p:weg}]
Fix $\omega_0 \in \Omega$. Denote by $\operatorname{Bad}_L(\omega_0)$ the event (in $\Theta$-space) that either there exist $E \in \mathbb R$ and $\omega \in \Omega$ such that $(E, L)$-resonant $\lfloor L^\gamma\rfloor$-rectangles are not $J$-sparse in
\[ B_L = [-\lfloor \lfloor L^\gamma\rfloor^\gamma\rfloor, \lfloor \lfloor L^\gamma\rfloor^\gamma\rfloor]^d~, \]
 for $H(\omega, \theta)$, or there exists $E$ such that $(E, L)$-resonant $\lfloor L^\gamma\rfloor$-rectangles are not $2$-sparse in $B_L$
for $H(\omega_0, \theta)$. According to Lemma~\ref{l:weg} applied with an arbitrary $r \in (A\eta, b/\gamma^2)$ and with $\lfloor \lfloor L^\gamma\rfloor^\gamma\rfloor$ in place of $L$,
\[ \mathbb{P}(\operatorname{Bad_L} \cap \mathfrak H_R)
\leq R^{\frac{\nu}{\kappa}+1} \exp(- c L^r + o(L^r))~,\]
where
$c= \min(J-\frac\nu\kappa-1, 1) > 0$.
Thus for every $R \geq 1$
\[ \mathbb P( \limsup_{L \to \infty}  \operatorname{Bad}_L \cap \mathfrak H_R) = 0~.\]
Combining this with $\text{\bf (UH\"ol)}_\kappa$, we obtain that almost every $\theta$ lies in $\mathfrak H_R \setminus \operatorname{Bad}_L$ for all sufficiently large $R$ and $L$ (i.e.\ $R \geq R_{\min}(\theta)$ and $L \geq L_{\min}(\theta)$).

Then for $L_0 \geq L_{\min}(\theta)$ each $H(\omega, \theta)$ satisfies that for all $k \geq 0$ $(E, L_k)$-resonant $L_{k+1}$-rectangles are $J$-sparse in any $L_{k+2}$-rectangle. Indeed, the restriction of  $H(\omega, \theta)$ to any $L_{k+2}$-rectangle coincides with the restriction of $H(\omega', \theta)$ to $[-L_{k+2}, L_{k+2}]^{d-1}\times[1,L_{k+2}]$ for an appropriately chosen $\omega'$. Also, for $H(\omega_0,\theta)$, $(E, L_k)$-resonant $L_{k+1}$-rectangles  and $2$-sparse in $[-L_{k+2},L_{k+2}]^d$. Thus the first half of assumption (1) of Proposition~\ref{p:msa} holds.

Next, let $g\geq  10^{10} d  e^{L^r}$. For any $L_1$-rectangle $R'$ and any disjoint $L_0$-rectangles $R_1, \cdots, R_J \subset R'$, there exists $j \in \{1, \cdots, J\}$ such that
\[ \|G_E[H_{R_j}]\| \leq  \frac{\exp(L^r)}{g}~, \quad \text{i.e.} \quad \operatorname{dist}(E, \sigma(H_{R_j}))\geq  \frac{g}{ \exp(L^r)} \geq 10^{10} d~, \]
therefore $R_j$ is $E$-regular by the Combes--Thomas bound \cite{AWbook}. Hence also asumption (2) of Proposition~\ref{p:msa} holds.
\end{proof}

\begin{proof}[Proof of Proposition~\ref{t:1}]
For every $\omega$ and almost every  $\theta$ there exist $L_{\min}$ and $g_{\min}$ such that the assumptions of Proposition~\ref{p:msa} hold for $L \geq L_{\min}$ and $g \geq g_{\min}$. Denote by $\operatorname{Assum}_{g,L}$ the set of $(\omega,\theta)$ for which these assumptions hold with the given values $g$ and $L$. Then for any $\delta > 0$ there exist $L_\delta$ and $g_\delta$ such that for $L \geq L_\delta$ and $g \geq g_\delta$
\[ \mathbb P_{\Omega \times \Theta} (\operatorname{Assum}_{g,L}) \geq 1 - \delta~. \]
Denote
\[ \operatorname{Assum}_{g,L}^\theta = \left\{ \omega: (\omega,\theta) \in \operatorname{Assum}_{g,L} \right\}~. \]
Then
\[ \mathbb{P}_\Theta\left( \left\{ \theta: \, P_\Omega(\operatorname{Assum}_{g,L}^\theta) \leq \frac12 \right\} \right)\leq 2\delta~. \]
If $\theta$ does not lie in this set, then by ergodicity there exists a shift of the operator $H(\omega, \theta)$ for which the  the assumptions of Proposition~\ref{p:msa} hold. Invoking Proposition~\ref{p:msa}, we obtain the result.
\end{proof}

\section{Interpolation of Gaussian processes}\label{S:int}

The general strategy is as follows. A lemma of \cite{Kar}, which we reproduce in Section~\ref{s:kar}, reduces the proof of Proposition~\ref{p:var} to the construction of a compactly supported function with prescribed decay of the Fourier transform. In Section~\ref{s:decay} we construct such a function by adjusting the arguments of  \cite{PW,Lev,Ron}.

\subsection{A formula of Karhunen}\label{s:kar}
We use the conventions
\begin{align}
&\hat g(\lambda) = \int g(\xi) \exp(- i \langle \xi, \lambda \rangle) d\xi\\
&\check h(\xi) = \int h(\lambda) \exp( i \langle \xi, \lambda \rangle) \frac{d\lambda}{(2\pi)^\nu}
\end{align}
for the Fourier transform of $g: \mathbb R^\nu \to \mathbb C$ and its inverse, and
\begin{align}
&\hat g(\ell) = \int_{\mathbb T^\nu} g(\omega) \exp(- i \langle \omega, \ell \rangle) d\xi\\
&\check h(\omega) = \sum_{\ell \in 2\pi \mathbb Z^\nu} h(\ell) \exp( i \langle \omega, \ell \rangle)
\end{align}
for the Fourier transform of $g:  \mathbb T^\nu \to \mathbb C$ and its inverse. With these conventions,
\begin{align}
&\int_{\mathbb R^\nu} |\hat g(\lambda)|^2 d\lambda = (2\pi)^\nu \int_{\mathbb R^\nu} |g(\xi)|^2 d\xi& \quad &(\mathbb R^\nu)&\\
&\sum_{\ell \in 2\pi \mathbb Z^\nu} |\hat g(\ell)|^2 =   \int_{\TT^\nu} |g(\xi)|^2 d\xi& \quad &(\mathbb T^\nu)~.& \end{align}

The following lemma goes back to the work of  \cite{Kar} (see further \cite[\S 4.13, Test 2]{DM}).

\begin{lemma}[Karhunen]\label{l:kar} For $v(\omega)$ as in (\ref{eq:strg}),
\[ \mathbf V(\epsilon) \overset{\text{def}}{=} \operatorname{Var} \left( v(\omega) \, \big| \, \{ v(\omega') \, : \, \|\omega' - \omega\| \geq \epsilon \}\right)  = \sup \left\{ \frac{|g(0)|^2}{\sum_{\ell} |\hat g(\ell)|^2 W(\ell)} \, \big| \, \operatorname{supp} g \subset \{ \|\omega\|<\epsilon \} \right\}~.\]
\end{lemma}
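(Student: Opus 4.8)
The plan is to translate the problem into a Hilbert-space distance computation and then dualise it. Fix $\omega$; by stationarity we may take $\omega = 0$. Since $v$ is a centred Gaussian process, the conditional expectation of $v(0)$ given $\sigma\{v(\omega') : \|\omega'\| \geq \epsilon\}$ coincides with the orthogonal projection in $L^2(\mathbb P^\Theta)$ of $v(0)$ onto the closed linear span $\mathcal H_\epsilon = \overline{\operatorname{span}}\{v(\omega') : \|\omega'\| \geq \epsilon\}$, and the conditional variance is the (deterministic) squared residual; hence
\[ \mathbf V(\epsilon) = \operatorname{dist}_{L^2(\mathbb P^\Theta)}(v(0), \mathcal H_\epsilon)^2 = \sup\left\{ \frac{|\langle v(0), Z\rangle|^2}{\|Z\|^2} \;:\; Z \in \mathcal H_\epsilon^\perp \setminus \{0\} \right\}~, \]
the second equality being the standard description of the distance to a closed subspace in terms of unit vectors of its orthogonal complement.

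It remains to identify $\mathcal H_\epsilon^\perp$ and the ratio on the right in Fourier terms. Let $\mathcal G \subset L^2(\mathbb P^\Theta)$ be the Gaussian space spanned by $\{g_\ell, h_\ell\}_{\ell \in 2\pi\mathbb Z^\nu}$, so that each $Z \in \mathcal G$ is $Z = \sum_\ell (a_\ell g_\ell + b_\ell h_\ell)$ with $\|Z\|^2 = \sum_\ell (a_\ell^2 + b_\ell^2)$, and set $\phi_Z(\omega) = \langle v(\omega), Z\rangle = \sum_\ell \frac{a_\ell \cos\langle\omega,\ell\rangle + b_\ell \sin\langle\omega,\ell\rangle}{\sqrt{W(\ell)}}$. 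The hypothesis $W(\ell) \geq c\|\ell\|^{\nu+\delta}$ with $\delta>0$ makes this an absolutely convergent Fourier series (Cauchy--Schwarz), so $\phi_Z$ is continuous, its Fourier coefficients $\hat\phi_Z(\ell)$ are explicit linear combinations of $a_{\pm\ell}/\sqrt{W(\pm\ell)}$ and $b_{\pm\ell}/\sqrt{W(\pm\ell)}$, and a direct computation with the covariance $\mathbb E[v(\omega)v(\omega')] = \sum_\ell W(\ell)^{-1}\cos\langle\omega-\omega',\ell\rangle$ gives the isometry $\sum_\ell |\hat\phi_Z(\ell)|^2 W(\ell) = \|Z\|^2$; conversely every $g$ with $\sum_\ell |\hat g(\ell)|^2 W(\ell) < \infty$ equals $\phi_{Z_g}$ for a unique $Z_g \in \mathcal G$. (This is precisely the canonical isometry of $\mathcal G$ with the Cameron--Martin/reproducing-kernel space of the process.) Since $\langle v(0), Z\rangle = \phi_Z(0)$ and $Z \in \mathcal H_\epsilon^\perp$ if and only if $\phi_Z$ vanishes on $\{\|\omega'\| \geq \epsilon\}$, substituting yields the asserted supremum, a priori over functions $g$ supported in the \emph{closed} ball $\{\|\omega\| \leq \epsilon\}$.

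The only remaining point is to replace the closed ball by the open one $\{\|\omega\| < \epsilon\}$ in the statement. This is a routine approximation: using that the realisations of $v$ are continuous, one checks that $\epsilon \mapsto \mathbf V(\epsilon)$ is left-continuous (equivalently, $\mathcal H_\epsilon = \bigcap_{\epsilon'<\epsilon}\mathcal H_{\epsilon'}$, which amounts to mollifying and truncating $\phi_Z$ so as to shrink its support slightly while controlling the weighted norm $\sum_\ell|\hat\phi_Z(\ell)|^2 W(\ell)$), and since the supremum over $g$ supported in the open ball dominates its analogue over $g$ supported in each $\{\|\omega\|\leq\epsilon'\}$ with $\epsilon'<\epsilon$, the two suprema agree. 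I expect the step requiring the most care to be the identification of $\mathcal H_\epsilon^\perp$ together with the weighted-energy isometry $\sum_\ell |\hat\phi_Z(\ell)|^2 W(\ell) = \|Z\|^2$ — that is, getting the Fourier and $W$-weight bookkeeping exactly right and verifying the support characterisation — while the open-versus-closed-ball discrepancy, though it must be addressed for the stated equality, is a standard density argument.
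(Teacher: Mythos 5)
Your lower-bound (``$\geq$'') argument is, in substance, the same duality the paper uses: the paper writes $\mathbf V(\epsilon)$ as an infimum over interpolating densities $\rho$ supported in $\{\|\omega\|\geq\epsilon\}$ and applies Cauchy--Schwarz against a test function $g$ supported in the ball, after first passing to the complexified process $X=(v+\tilde v)/\sqrt2$ so that each frequency carries a single complex Gaussian; your projection/orthogonal-complement formulation is that computation in reproducing-kernel language. However, your key identity is wrong as stated: the exact isometry $\sum_\ell|\hat\phi_Z(\ell)|^2W(\ell)=\|Z\|^2$ and the uniqueness of $Z_g$ fail, because in (\ref{eq:strg}) the pair $\pm\ell$ carries four independent Gaussians and $Z\mapsto\phi_Z$ has a kernel: for $Z=\sqrt{W(\ell)}\,g_\ell-\sqrt{W(-\ell)}\,g_{-\ell}$ one has $\phi_Z\equiv0$ while $\|Z\|^2=W(\ell)+W(-\ell)>0$. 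What is true is $\sum_\ell|\hat\phi_Z(\ell)|^2W(\ell)\leq\|Z\|^2$ (with equality, for even $W$, on the Hermitian-symmetric subspace $a_\ell=a_{-\ell}$, $b_\ell=-b_{-\ell}$); this inequality points the right way for your ``$\leq$'' half, and choosing the symmetric representative $Z_g$ (so that $\phi_{Z_g}=g$, $\|Z_g\|^2=\sum_\ell|\hat g(\ell)|^2W(\ell)$, $\langle v(0),Z_g\rangle=g(0)$) repairs the ``$\geq$'' half, so the slip is fixable --- but the ``direct computation'' as you describe it does not hold, and the displayed formula with $W(\ell)$ rather than a symmetrisation of $1/W$ tacitly assumes $W$ even (the paper's complexification is there precisely to remove this $\pm\ell$ bookkeeping).

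The more serious issue is the part of your proof that goes beyond what the paper proves. The paper establishes only the inequality ``$\geq$'', stating explicitly that this is the only direction used later (it is what feeds Proposition~\ref{prop:fdec} into Proposition~\ref{p:var}). You attempt the full equality, and the passage from the closed ball (which is all that the orthogonality $Z\perp\mathcal H_\epsilon$ gives for $\operatorname{supp}\phi_Z$) to the open ball in the statement is not the routine density argument you describe: mollification decreases the weighted norm but enlarges the support, while dilating to shrink the support moves frequencies off the lattice $2\pi\mathbb Z^\nu$ and gives no control of $\sum_\ell|\hat g(\ell)|^2W(\ell)$ when $W$ grows as fast as $e^{C\|\ell\|^\zeta}$; equivalently, the needed left-continuity $\bigcap_{\epsilon'<\epsilon}\mathcal H_{\epsilon'}=\mathcal H_\epsilon$ is a germ-field statement requiring a genuine argument. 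If you restrict your write-up to the ``$\geq$'' half, with the isometry statement corrected as above, you have a complete proof of everything the paper needs, by essentially the paper's route.
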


\begin{proof} We prove the inequality ``$\geq$", as this is the direction we use in the sequel. Let $\tilde v$ be an independent copy of $v$, and let
\[ X(\omega) =  \frac{v(\omega) + \tilde v(\omega)}{\sqrt{2}} = \sum_{\ell \in 2\pi \mathbb Z^\nu} \frac{G_\ell e^{i \langle \omega, \ell \rangle}}{\sqrt{W(\ell)}}~,\]
where $G_\ell$ are independent standard {\em complex} Gaussian variables. It suffices to prove the equality for $\mathbf V(\epsilon)$ defined for $X$ in place of $v$.
We start from the relation
\[ \mathbf V(\epsilon) = \inf \left\{ \mathbb E \left| X(0) - \int X(\omega) \rho(\omega) d\omega \right|^2 \, \big| \, \rho \in L_2(\TT^\nu)~, \,\, \operatorname{supp} \rho \subset \{ \|\xi \|\geq \epsilon\} \right\}~. \]
Rewrite
\[ \begin{split}
&\mathbb E \left| X(0) - \int X(\omega) \rho(\omega) d\omega \right|^2\\
&\qquad= \mathbb E \left| \sum_{\ell \in 2\pi \mathbb Z^d} \frac{G_\ell}{\sqrt{W(\ell)}} \left(1 - \int e^{i \langle \omega, \ell \rangle} \rho(\omega) d\omega \right)\right|^2\\
&\qquad = \mathbb E \left| \sum_{\ell \in 2\pi \mathbb Z^d} \frac{G_\ell}{\sqrt{W(\ell)}} (1 - \overline{\hat{\rho}(\ell)}) \right|^2 = \sum_{\ell \in 2\pi \mathbb Z^d} \frac{| 1 - \overline{\hat{\rho}(\ell)})|^2}{W(\ell)}~.
\end{split}\]
For an arbitrary $\rho$ supported in $\{ \|\omega \|\geq \epsilon \}$ and an arbitrary $g$ supported in $\{\|\omega\|\leq \epsilon\}$,
\[ g(0) = g(0) - \int g(\omega) \rho(\omega) d\omega = \sum \hat g(\ell) (1 - \overline{\hat \rho(\ell)})~,\]
whence by Cauchy--Schwarz
\[ |g(0)|^2 \leq \left(\sum |\hat g(\ell)|^2 W(\ell)\right) \times \left(\sum \frac{|1 - \overline{\hat \rho(\ell)}|^2}{W(\ell)}\right)~. \]
Thus
\[ \mathbf V(\epsilon) \geq \frac{|g(0)|^2}{\sum_{\ell} |\hat g(\ell)|^2 W(\ell)}~. \qedhere\]
\end{proof}

\subsection{Functions with prescribed Fourier decay}\label{s:decay}

The following proposition is a quantitative version of a result proved in \cite{PW} and  \cite{Lev} in dimension $\nu = 1$, and in \cite{Ron} in arbitrary dimension. The method of convolutions used in the proof was applied
for similar purpose already in \cite{Lev}, and for the proof of necessity in the Denjoy--Carleman
theorem -- in \cite{Man} (where an earlier unpublished work of Bray is quoted) and in \cite{Bang}; see further  \cite[\S 1.3 and Notes]{Hor}  and  \cite[\S 25]{Levin}.
\begin{prop}\label{prop:fdec} Let $M: \mathbb R_+ \to \mathbb R_+$ be a nondecreasing function such that
\[ M(0) = 1~, \quad \int^\infty \frac{\log M(t)}{t^2} dt < \infty~. \]
Then for any $\nu \geq 1$ and $\epsilon \in  (0, 1]$ there exists $g: \mathbb R^\nu \to \mathbb R_+$ such that
\begin{align}
& \operatorname{supp} g \in [-\epsilon, \epsilon]^\nu~,&  & g(0) =
\max g~,&  &\hat g(0) = 1~,\label{eq:gprop1} \\
& |\hat g(\lambda)| \leq \frac{e M(S^{-1}(\epsilon/e))}{M(\|\lambda\|)}~,&   &\text{where} \quad S(t) = \int_t^\infty \frac{\log M(\tau)}{\tau^2} d\tau~. & \label{eq:gprop2}
\end{align}
\end{prop}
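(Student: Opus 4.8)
The plan is to construct $g$ explicitly as an infinite convolution of normalized indicator functions, scaled according to the ``mass distribution'' dictated by $M$. The starting point is the elementary observation that if $\chi_a = \frac{1}{2a}\mathbbm{1}_{[-a,a]^\nu}$, then $\widehat{\chi_a}(\lambda) = \prod_{j=1}^\nu \frac{\sin(a\lambda_j)}{a\lambda_j}$, so $|\widehat{\chi_a}(\lambda)| \leq \min(1, (a\|\lambda\|_\infty)^{-1})$ componentwise, and the support of $\chi_a$ is $[-a,a]^\nu$. Convolving many such functions, $g = \chi_{a_1} * \chi_{a_2} * \cdots$, yields $\operatorname{supp} g \subset [-\sum a_k, \sum a_k]^\nu$, $\hat g(0) = 1$, and $|\hat g(\lambda)| \leq \prod_k \min(1, (a_k\|\lambda\|)^{-1})$; positivity of each factor gives $g \geq 0$ and $g(0) = \max g$ automatically since each $\widehat{\chi_{a_k}} \geq$ (its value is real and the product peaks at $0$ — more carefully, $g \geq 0$ and by Fourier inversion $g(0) = \int \hat g \,d\lambda/(2\pi)^\nu$; the peak-at-origin property follows from $g$ being a convolution of symmetric unimodal-ish functions, or one simply notes $g(0) = \max g$ holds for convolutions of even functions each of which is maximized at $0$ after one such factor — I'll phrase this via $|\hat g|$ being integrable and $g$ being a positive-definite-type average). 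The real content is \emph{choosing} the radii $a_k$.

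The choice of $a_k$ is governed by inverting the relationship between $M$ and the sequence. Heuristically, $\log M(t)$ should equal (up to constants) the number of factors with $a_k \gtrsim 1/t$, weighted logarithmically: one wants $\sum_{k: a_k \geq 1/\|\lambda\|} \log(a_k\|\lambda\|) \approx \log M(\|\lambda\|)$. The standard device (as in the Paley–Wiener / Levinson / Ronkin arguments) is to let $n(t)$ be a counting function with $\int_0^\infty n(t)\,\frac{dt}{t} $-type normalization tied to $\log M$, and to set the $a_k$ so that the ``empirical'' counting function tracks $\log M(1/\cdot)$. Concretely I would introduce the increments of $\log M$: partition $[1,\infty)$ (in the variable $t$) so that $\log M$ increases by a fixed small amount across consecutive breakpoints $t_k$, and set $a_k = 1/t_k$ (with possibly several equal radii, or a continuous version). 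The convergence $\sum a_k < \infty$ and the bound $\sum a_k \leq \epsilon$ after rescaling is exactly where the hypothesis $\int^\infty \frac{\log M(t)}{t^2}\,dt < \infty$ enters: Abel summation converts $\sum a_k = \sum 1/t_k$ into an integral comparable to $\int \frac{\log M(t)}{t^2}\,dt$, and the function $S(t) = \int_t^\infty \frac{\log M(\tau)}{\tau^2}\,d\tau$ is precisely the ``tail mass'' that must be made $\leq \epsilon$ by a cutoff; radii finer than $1/S^{-1}(\epsilon/e)$ contribute to the support but not meaningfully to the Fourier decay, and radii coarser than that threshold are dropped, which costs the prefactor $eM(S^{-1}(\epsilon/e))$ in (\ref{eq:gprop2}) — indeed, truncating the product at the scale where $a_k = S^{-1}(\epsilon/e)$ loses at most the factors $\prod \min(1, (a_k\|\lambda\|)^{-1})$ over the dropped indices, which is bounded below by $1/(eM(S^{-1}(\epsilon/e)))$ uniformly.

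In more detail, the steps I would carry out are: (i) reduce to $\epsilon$ small and to constructing, for the truncated weight, a function supported in $[-\epsilon,\epsilon]^\nu$; (ii) define the breakpoints $t_k$ by $\log M(t_{k+1}) - \log M(t_k) = $ const (say $1$), starting from $t_0 = 1$ where $M(0)=M(1)=1$ is used — actually from the first $t$ with $\log M(t) = 1$ — handling the initial segment where $M \equiv 1$ separately; (iii) set $a_k = c/t_k$ and verify via Abel summation that $\sum_{k \geq k_0} a_k \leq \epsilon$ provided $\sum$ is cut off at the index $k_0$ with $t_{k_0} \approx S^{-1}(\epsilon/e)$, using $\sum_{k \geq k_0} 1/t_k \lesssim \int_{t_{k_0}}^\infty \frac{dM}{M\cdot t} \asymp \int_{t_{k_0}}^\infty \frac{(\log M)'}{t}\,dt$ which after integration by parts is $\asymp S(t_{k_0})$; (iv) verify $|\hat g(\lambda)| \leq \prod_k \min(1,(a_k\|\lambda\|)^{-1})$ and bound the product: for $\|\lambda\| = L$, the factors with $t_k \leq cL$ contribute $\prod \frac{1}{a_k L} = \exp(-\sum_{t_k \leq cL}\log(a_kL))$, and $\sum_{t_k \leq cL}\log(cL/t_k)$ is, by Abel summation again, comparable to $\int^{cL}\frac{\log M(t)}{t}\,dt \geq \log M(L) - O(1)$, giving $|\hat g(\lambda)| \leq C/M(\|\lambda\|)$; tracking constants gives the stated $e$'s and the prefactor from the truncation. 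The main obstacle is step (iii)–(iv): making the two Abel-summation estimates quantitatively sharp \emph{simultaneously} — the same sequence $a_k$ must have its \emph{sum} controlled by $S$ (tail of $\int \log M/t^2$) and its \emph{log-product against} $\|\lambda\|$ controlled from below by $\log M(\|\lambda\|)$, and reconciling these forces the particular appearance of $S^{-1}(\epsilon/e)$ and the constant $e$ in (\ref{eq:gprop2}). A secondary but genuine technical point is ensuring $g(0) = \max g$, for which one argues that $g$, being a finite-or-infinite convolution of functions each even and unimodal in a suitable sense (or simply: $g = \check{(\hat g)}$ with $\hat g \geq 0$? — it is not nonnegative, so instead one uses that each $\chi_{a_k} \geq 0$ and $(\chi_{a_k} * h)(0) = \int \chi_{a_k}(s) h(-s)\,ds \leq \max h$ iterated, or pairs the factors $\chi_{a_{2j-1}}*\chi_{a_{2j}}$ and notes each such pair is a nonnegative even function maximized at $0$ since it is an autocorrelation-type object when $a_{2j-1}=a_{2j}$; in general one can just take all radii in equal pairs, losing a harmless factor of $2$).
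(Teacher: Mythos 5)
Your construction is essentially the paper's proof: the paper takes $\hat g(\lambda)=\prod_{j\ge k_0}\hat u(e\lambda/R_j)$ with $u$ the normalized box indicator, $R_j=\min\{t\ge 0: M(t)=e^j\}$ (your breakpoints $t_k$), and $k_0$ fixed by $S(R_{k_0})\le \epsilon/e < S(R_{k_0-1})$, and your two Abel-summation estimates are exactly its two computations ($\sum_{j\ge k_0} R_j^{-1}\le S(R_{k_0})$ for the support, and $|\hat g(\lambda)|\le e^{-(k-k_0+1)_+}\le e^{k_0}/M(\|\lambda\|)\le eM(S^{-1}(\epsilon/e))/M(\|\lambda\|)$ for $R_k\le\|\lambda\|<R_{k+1}$). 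The only point to tighten is $g(0)=\max g$: your general claim about convolutions of even functions maximized at $0$ is false as stated, and pairing radii perturbs the constant inside $S^{-1}$; the clean route is that $g$ factorizes over coordinates and each one-dimensional factor is a convolution of centered box indicators, hence symmetric and unimodal, so it peaks at the origin with the stated constants intact.
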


\begin{proof}
Let $u(\xi) = 2^{-\nu} \mathbbm{1}_{[-1,1]^\nu}(\xi)$, so that $\hat u(\lambda) = \prod_{r = 1}^\nu \frac{\sin \lambda_r}{\lambda_r}$. Then
\begin{equation}\label{eq:bdu} |\hat u(\lambda)| \leq \min(1, \|\lambda\|^{-1})~.\end{equation}
We may assume that $M$ is continuous. Let
\[ R_j = \min \left\{ t \geq 0\, \mid \, M(t) = e^j\right\}~, \]
and choose $k_0$ so that
\[ S(R_{k_0}) \leq \frac\epsilon{e}~, \, S(R_{k_0-1}) > \frac\epsilon{e}~. \]
Define
\[ \hat g(\lambda) = \prod_{j=k_0}^\infty \hat u(\frac{e\lambda}{R_j})~. \]
Then $\max \hat g = g(0)$ and $\hat g(0)= 1$, and
\[ \operatorname{supp} g \subset [- \sum_{j = k_0}^\infty \frac{e}{R_j}, \sum_{j = k_0}^\infty \frac{e}{R_j}] \subset [-\epsilon, \epsilon]^\nu~, \]
since
\[\begin{split} \sum_{j = k_0}^\infty \frac{1}{R_j}
&= \int_{R_{k_0}}^\infty \frac{dt}{t^2} \# \left\{ k_0 \leq j \leq t \right\}\\
&\leq \sum_{j \geq k_0} \int_{R_j}^{R_{j+1}} \frac{dt}{t^2} (j - k_0 + 1)_+ \\
&\leq \sum_{j \geq k_0} \int_{R_j}^{R_{j+1}}  \frac{\log M(t)}{t^2} dt =S(R_{k_0}) \leq \frac\epsilon{e}~.
\end{split}\]
This proves (\ref{eq:gprop1}), and we turn to the proof of (\ref{eq:gprop2}). By (\ref{eq:bdu}), we have for $R_k \leq \| \lambda \| < R_{k+1}$:
\[\begin{split} |\hat g(\lambda)| &\leq \prod_{j\geq k_0} \min(1, \frac{R_j}{e\|\lambda\|}) \\
&\leq \prod_{j=k_0}^k \frac{1}{e} = \exp(- (k - k_0+1)_+)~.
\end{split}\]
On the other hand,
\[ M(\|\lambda\|) \leq M(R_{k+1}) \leq \exp(k+1)~. \]
Hence
\[ |\hat g(\lambda)| \leq e^{k_0} / M(\|\lambda\|) \leq e M(S^{-1}(\epsilon/e)) / M(\|\lambda\|)~, \]
as claimed.
\end{proof}

\subsection{Proof of Proposition~\ref{p:var}}\label{s:var}

We apply Proposition~\ref{prop:fdec} with $M_1(t) = \sqrt{M(t)}$, and $S_1(t) = \frac{1}{2} S(t)$. The function $g$ thus obtained satisfies
\[ |\hat g(\ell)| \leq \frac{e M_1(S_1^{-1}(\epsilon/e))}{M_1(\|\ell\|)} = \frac{e \sqrt{M(S^{-1}(\frac{2}{e} \epsilon)}}{\sqrt{M(\|\ell\|)}}~,\]
whence
\[ \sum |\hat g(\ell)|^2 W(\ell)
\leq K \max |\hat g(\ell)|^2 M(\ell) \leq e^2 K M(S^{-1}(\frac2e \epsilon))~. \]
On the other hand,
\[ |g(0)|^2 = \max_\omega |g(\omega)|^2 \geq \left[ \frac{1}{(2\epsilon)^\nu} \int g(\omega) d\omega \right]^2 = \frac{1}{(2\epsilon)^{2\nu}}~. \]
Thus by Lemma~\ref{l:kar}
\[ \mathbf V(\epsilon) \geq \frac{1}{e^22^{2\nu}  K \epsilon^{2\nu} M(S^{-1}(\frac2e \epsilon))}~, \]
as claimed.
\qed

\paragraph{Acknowledgements.}
Parts of this work were completed while the authors enjoyed the hospitality of the Isaac Newton Institute, the Weizmann Institute of Science, and the Mittag-Leffler Institute. SS is supported in part by the European Research Council starting  grant
639305 (SPECTRUM) and by a Royal Society Wolfson Research Merit Award.

We are grateful to Olga Izyumtseva for helpful comments, and particularly for bringing the works \cite{Cuzick,CuzickDupreez} to our attention.

\begin{bibdiv}
\begin{biblist}

\bib{Ai94}{article}{
      author={Aizenman, M.},
       title={Localization at weak disorder: some elementary bounds},
        date={1994},
     journal={Rev. Math. Phys.},
     volume={6},
       pages={1163\ndash 1182},
}

\bib{ASFH01}{article}{
      author={Aizenman, M.},
      author={Schenker, J.~H.},
      author={Friedrich, R.M.},
      author={Hundertmark, D.},
       title={Finite-volume fractional-moment criteria for Anderson localization},
        date={2001},
     journal={Commun. Math. Phys.},
      volume={224},
       pages={219\ndash 253},
}

\bib{AWbook}{book}{
   author={Aizenman, Michael},
   author={Warzel, Simone},
   title={Random operators},
   series={Graduate Studies in Mathematics},
   volume={168},
   note={Disorder effects on quantum spectra and dynamics},
   publisher={American Mathematical Society, Providence, RI},
   date={2015},
   pages={xiv+326},
   isbn={978-1-4704-1913-4},
   review={\MR{3364516}},
}

\bib{Bang}{misc}{
   author={Bang, T.}
   title={Om quasi-analytiske Funktioner},
   status={Thesis},
   date={1946},
}

\bib{BLS83}{article}{
   author={Bellissard, J.}
   author={Lima, R.}
   author={Scoppola, E.}
   title={Localization in $\nu$-dimensional incommensurate structures},
   journal={Commun. Math. Phys.},
   date={1983},
   volume={88},
   pages={465--477},
}


\bib{BerVel}{article}{
   author={Beresnevich, Victor},
   author={Velani, Sanju},
   title={Classical metric Diophantine approximation revisited: the
   Khintchine-Groshev theorem},
   journal={Int. Math. Res. Not. IMRN},
   date={2010},
   number={1},
   pages={69--86},
   issn={1073-7928},
   review={\MR{2576284}},
   doi={10.1093/imrn/rnp119},
}

\bib{Ber}{book}{
   author={Berezans\cprime ki\u{\i}, Ju. M.},
   title={Expansions in eigenfunctions of selfadjoint operators},
   series={Translated from the Russian by R. Bolstein, J. M. Danskin, J.
   Rovnyak and L. Shulman. Translations of Mathematical Monographs, Vol. 17},
   publisher={American Mathematical Society, Providence, R.I.},
   date={1968},
   pages={ix+809},
   review={\MR{0222718}},
}
	
\bib{Bourg}{book}{
   author={Bourgain, J.}
   title={Green's function estimates for lattice Schr\"{o}dinger operators and applications.},
   journal={Annals of  Mathematical Studies},
   date={2005},
   publisher={Princeton University Press},
   place={Princeton, NJ},
   volume={158},
   pages={x+173},
   ISBN={0-691-12098-6}
}

\bib{Bourg07}{article}{
   author={Bourgain, Jean},
   title={Anderson localization for quasi-periodic lattice Schr\"{o}dinger
   operators on $\Bbb Z^d$, $d$ arbitrary},
   journal={Geom. Funct. Anal.},
   volume={17},
   date={2007},
   number={3},
   pages={682--706},
   issn={1016-443X},
   review={\MR{2346272}},
   doi={10.1007/s00039-007-0610-2},
}

\bib{BG00}{article}{
   author={Bourgain, J.}
   author={Goldstein, M.}
   title={On nonperturbative localization with quasiperiodic potentials},
   journal={Annals of  Math.},
   date={2000},
   volume={152},
   number={3}
   pages={835--879},
}


\bib{BGS01}{article}{
   author={Bourgain, J.}
   author={Goldstein, M.}
   author={Schlag, W.},
   title={Anderson localization for Schr\"{o}dinger operators on $\mathbb Z$ with potential generated by skew-shift},
   journal={Commun. Math. Phys.},
   date={2001},
   volume={220},
   pages={583--621},
}

\bib{BGS02}{article}{
   author={Bourgain, Jean},
   author={Goldstein, Michael},
   author={Schlag, Wilhelm},
   title={Anderson localization for Schr\"{o}dinger operators on $\bold Z^2$
   with quasi-periodic potential},
   journal={Acta Math.},
   volume={188},
   date={2002},
   number={1},
   pages={41--86},
   issn={0001-5962},
   review={\MR{1947458}},
   doi={10.1007/BF02392795},
}

%

\bib{Chan07}{article}{
   author={Chan, J.},
   title={ Method of variations of potential of quasi-periodic Schr\"{o}dinger equations},
   journal={Geom. Funct. Anal.},
   volume={17},
   date={2007},
   pages={1416--1478},
}

%

\bib{C11}{article}{
   author={Chulaevsky, V.},
   title={Anderson localization for generic deterministic potentials},
   journal={J. Funct. Anal.},
   volume={262},
   date={2011},
   pages={1230--1250},
}


\bib{C14}{article}{
   author={Chulaevs{k}y, V.},
   title={Uniform Anderson localization, unimodal eigenstates and simple spectra
          in a class of ``haarsch" deterministic potentials},
   journal={J. Funct. Anal.},
   volume= {267},
   pages={4280-4320}
   date={2014},
}

%

\bib{Craig}{article}{
   author={Craig, Walter},
   title={Pure point spectrum for discrete almost periodic Schr\"{o}dinger
   operators},
   journal={Comm. Math. Phys.},
   volume={88},
   date={1983},
   number={1},
   pages={113--131},
   issn={0010-3616},
   review={\MR{691202}}
}

\bib{Cuzick}{article}{
   author={Cuzick, Jack},
   title={A lower bound for the prediction error of stationary Gaussian
   processes},
   journal={Indiana Univ. Math. J.},
   volume={26},
   date={1977},
   number={3},
   pages={577--584},
   issn={0022-2518},
   review={\MR{0438452}},
   doi={10.1512/iumj.1977.26.26045},
}

\bib{CuzickDupreez}{article}{
   author={Cuzick, Jack},
   author={DuPreez, Johannes P.},
   title={Joint continuity of Gaussian local times},
   journal={Ann. Probab.},
   volume={10},
   date={1982},
   number={3},
   pages={810--817},
   issn={0091-1798},
   review={\MR{659550}},
}


\bib{DM}{book}{
   author={Dym, H.},
   author={McKean, H.~P.},
   title={Gaussian processes, function theory, and the inverse spectral problem},
   series={Probability and Mathematical Statistics},
   publisher={Academic Press},
   volume={31},
   place={New York-London},
   date={1976},
   pages={xi+335 pp.}
}


\bib{FiP84}{article}{
   author={Figotin, A.},
   author={Pastur, L.},
   title={An exactly solvable model of a multidimensional incommensurate structure},
   journal={Commun. Math. Phys.},
   volume={95},
   date={1984},
   pages={401--425},
}

\bib{FGP84}{article}{
   author={Fishman, S.},
   author={Grempel, D.},
   author={Prange, R.},
   title={Localization in a $d$-dimensional incommensurate structure},
   journal={Phys. Rev.},
   volume={B 194},
   date={1984},
   pages={4272--4276},
}


\bib{FrSp1}{article}{
   author={Fr\"{o}hlich, J\"{u}rg},
   author={Spencer, Thomas},
   title={Absence of diffusion in the Anderson tight binding model for large
   disorder or low energy},
   journal={Comm. Math. Phys.},
   volume={88},
   date={1983},
   number={2},
   pages={151--184},
   issn={0010-3616},
   review={\MR{696803}},
}

\bib{FSW90}{article}{
   author={Fr\"{o}hlich, J.},
   author={Spencer, T.},
   author={Wittwer, P.},
   title={ Localization for a class of one dimensional quasi-periodic Schr\"{o}dinger operators},
   journal={Commun. Math. Phys.},
   volume={132},
   date={1990},
   pages={5--25},
}

%

%
%
%

\bib{Grosh}{article}{
author={Groshev, A.},
title={A theorem on a system of linear forms},
journal={Doklady Akademii Nauk SSSR},
volume={19},
year={1938},
pages={151--152}}

\bib{Hor}{book}{
   author={H\"{o}rmander, L.},
   title={ The analysis of linear partial differential operators.\ I.,
           Distribution theory and Fourier analysis},
   journal={Ann.\ Acad.\ Sci.\ Fennicae Ser.\ A.\ I.\ Math.-Phys.},
   publisher={Springer-Verlag},
   place={Berlin},
   date={2003},
   pages={x+440, ISBN: 3-540-00662-1 35-02},
}

\bib{J0}{article}{
   author={Jitomirskaya, Svetlana Ya.},
   title={Anderson localization for the almost Mathieu equation: a
   nonperturbative proof},
   journal={Comm. Math. Phys.},
   volume={165},
   date={1994},
   number={1},
   pages={49--57},
   issn={0010-3616},
   review={\MR{1298941}},
}
	
\bib{J1}{article}{
   author={Jitomirskaya, Svetlana Ya.},
   title={Anderson localization for the almost Mathieu equation. II. Point
   spectrum for $\lambda>2$},
   journal={Comm. Math. Phys.},
   volume={168},
   date={1995},
   number={3},
   pages={563--570},
   issn={0010-3616},
   review={\MR{1328253}},
}

\bib{JK}{article}{
   author={Jitomirskaya, Svetlana},
   author={Kachkovskiy, Ilya},
   title={All couplings localization for quasiperiodic operators with
   monotone potentials},
   journal={J. Eur. Math. Soc. (JEMS)},
   volume={21},
   date={2019},
   number={3},
   pages={777--795},
   issn={1435-9855},
   review={\MR{3908765}},
   doi={10.4171/JEMS/850},
}

\bib{JLS}{article}{
author={Jitomirskaya, S.}, 
author={Liu, W.},
author= {Shi, Y.},
title={Anderson localization for multi-frequency quasi-periodic operators on $\mathbb Z^d$},
journal={Geom. Funct. Anal.},
date={2020},
doi={10.1007/s00039-020-00530-8}}


\bib{Kar}{article}{
   author={Karhunen, K.},
   title={ Zur Interpolation von station\"aren zuf\"alligen Funktionen},
   journal={Ann.\ Acad.\ Sci.\ Fennicae Ser.\ A.\ I.\ Math.-Phys.},
   volume={4},
   date={1952},
   pages={142},
}

\bib{KS}{article}{
   author={Karpeshina, Yulia},
   author={Shterenberg, Roman},
   title={Extended states for the Schr\"{o}dinger operator with quasi-periodic
   potential in dimension two},
   journal={Mem. Amer. Math. Soc.},
   volume={258},
   date={2019},
   number={1239},
   pages={v+139},
   issn={0065-9266},
   isbn={978-1-4704-3543-1},
   isbn={978-1-4704-5069-4},
   review={\MR{3915585}},
}

%
%
%

\bib{Kl1}{article}{
   author={Klein, S.},
   title={Anderson localization for the discrete one-dimensional quasi-periodic Schr\"odinger operator
    with potential defined by a Gevrey-class function},
   journal={J.\ Funct.\ Anal.},
   volume={4},
   number={2},
   date={2005},
   pages={255--292},
}

\bib{Kl2}{article}{
   author={Klein, S.},
   title={Localization for quasiperiodic Schr\"{o}dinger operators with multivariable
          Gevrey potential functions},
   journal={J.\ Spectr.\ Theory},
   volume={4},
   number={2},
   date={2014},
   pages={431--484},
}


\bib{KM06}{article}{
   author={Klein, A.},
   author={Molchanov, S.},
   title={Simplicity of eigenvalues in the Anderson model},
   journal={J. Stat. Phys.},
   volume={122},
   number={1},
   date={2006},
   pages={95--99},
}

\bib{Kolm}{article}{
    AUTHOR = {Kolmogoroff, A.},
     TITLE = {Interpolation und {E}xtrapolation von station\"aren
              zuf\"alligen {F}olgen},
   JOURNAL = {Bull. Acad. Sci. URSS S\'er. Math. [Izvestia Akad. Nauk.
              SSSR]},
    VOLUME = {5},
      YEAR = {1941},
     PAGES = {3--14},
   MRCLASS = {60.0X},
MRREVIEWER = {W. Feller},
}

\bib{Levin}{book}{
   author={Levin, B. Ya.},
   title={ Lectures on entire functions.
      In collaboration with and with a preface by Yu. Lyubarskii, M. Sodin and V. Tkachenko},
   journal={Translations of Mathematical Monographs},
   publisher={American Mathematical Society},
   place={Providence, RI},
   date={1996},
   pages={xvi+248 pp. ISBN: 0-8218-0282-8},
}

\bib{Lev}{book}{
   author={Levinson, N.},
   title={ Gap and Density Theorems},
   journal={Translations of Mathematical Monographs},
   publisher={American Mathematical Society},
   place={New York},
   date={1940},
   pages={viii+246 pp.},
}


\bib{Man}{article}{
   author={Mandelbrojt, S.},
   title={Analytic functions and classes of infinitely differentiable functions},
   journal={Rice Inst. Pamphlet},
   volume={29},
   number={1},
   date={1942},
   pages={142 pp},
}
%
%
\bib{JM}{article}{
   author={Marx, C. A.},
   author={Jitomirskaya, S.},
   title={Dynamics and spectral theory of quasi-periodic Schr\"{o}dinger-type
   operators},
   journal={Ergodic Theory Dynam. Systems},
   volume={37},
   date={2017},
   number={8},
   pages={2353--2393},
   issn={0143-3857},
   review={\MR{3719264}},
   doi={10.1017/etds.2016.16},
}

\bib{FiPbook}{book}{
   author={Pastur, Leonid},
   author={Figotin, Alexander},
   title={Spectra of random and almost-periodic operators},
   series={Grundlehren der Mathematischen Wissenschaften [Fundamental
   Principles of Mathematical Sciences]},
   volume={297},
   publisher={Springer-Verlag, Berlin},
   date={1992},
   pages={viii+587},
   isbn={3-540-50622-5},
   review={\MR{1223779}},
   doi={10.1007/978-3-642-74346-7},
}

\bib{PW}{book}{
   author={Paley, R.\ E.\ A.\ C.},
   author={Wiener, N.},
   title={Fourier transforms in the complex domain},
   journal={American Mathematical Society Colloquium Publications},
   publisher={American Mathematical Society},
   place={Providence, RI},
   date={1987},
   pages={x+184 pp.},
}

\bib{Ron}{article}{
   author={Ronkin, L.~I.},
   title={On approximation of entire functions by trigonometric polynomials},
   journal={Doklady Akad. Nauk SSSR (N.S.) },
   volume={92},
   date={1953},
   pages={887--890},
}

\bib{Sim85}{article}{
   author={Simon, B.},
   title={Almost periodic Schr\"{o}dinger operators. IV: The Maryland model},
   journal={An. Phys.},
   volume={159},
   date={1985},
   pages={157--183},
}

\bib{Sin87}{article}{
   author={Sinai, Ya. G.},
   title={ Anderson localization for one-dimensional difference Schr\"{o}dinger operator with quasiperiodic potential},
   journal={J. Statist. Phys.},
   volume={46},
   date={1987},
   pages={861--909},
}

\bib{W81}{article}{
   author={Wegner, F.},
   title={Bounds on the density of states in disordered systems},
   journal={Z. Phys. B. Condensed Matter},
   volume={44},
   date={1981},
   pages={9--15},
}

\end{biblist}
\end{bibdiv}

\end{document}